\documentclass{article}[12pt]

\linespread{1.3}
 \usepackage[margin=1.5in]{geometry}

\usepackage{amssymb}
\usepackage{amsfonts}
\usepackage{amsmath}
\usepackage{amsthm}
\usepackage{pifont}
\usepackage{nicefrac}
\usepackage{url}

\usepackage[usenames,dvipsnames]{pstricks}
\usepackage{epsfig}
\usepackage{pst-grad} % For gradients
\usepackage{pst-plot} % For axes

\usepackage{algpseudocode}

\usepackage{comment}

\newtheorem{theorem}{Theorem}[section]

\newtheorem{proposition}[theorem]{Proposition}

\title{Manipulation and Control Complexity of Schulze Voting}

\author{Curtis Menton\footnote{Supported in part by grants NSF-CCF-0915792 and NSF-CCF-1101479.}\\
  Department of Computer Science \\
  University of Rochester \\
  Rochester, NY, USA
\and
Preetjot Singh\\
EECS,\\
Northwestern University,\\
Evanston, IL, USA
}

%% \date{Univ. of Rochester Comp. Sci. Dept. Technical Report
%%   TR-2012-977\\ September 27, 2012}

\begin{document}

\maketitle

\begin{abstract}
Schulze voting is a recently introduced voting system enjoying unusual
popularity and a high degree of real-world use, with users including
the Wikimedia foundation, several branches of the Pirate Party, and
MTV.  It is a Condorcet voting system that determines the winners of
an election using information about paths in a graph representation of
the election.  We resolve the complexity of many electoral control
cases for Schulze voting.  We find that it falls short of the best
known voting systems in terms of control resistance, demonstrating
vulnerabilities of concern to some prospective users of the system.
\end{abstract}

% TODO add table with previously known results and my results

% TODOlater add labels/references to proofs and definitons and use them.  
% do proof of rather than unlabeled proof

\section{Preliminaries}

\subsection{Schulze Voting} 
Schulze voting is a Condorcet voting system recently introduced by
Marcus Schulze~\cite{sch:j:schulze}.  It was designed to
effectively handle candidate cloning:  In many voting systems, the
inclusion of several similar candidates ends up spreading out their
support from similarly-minded voters and thus lessens the influence of
those voters.  
It has a somewhat more complex winner procedure than other common
voting systems, requiring the use of a graph best-path finding
algorithm, but it is still solvable in polynomial time, rendering
Schulze a tractable voting system.

Schulze voting is currently used by a number of organizations for
their internal elections, including the Wikimedia foundation, several
branches of the Pirate Party, a civil-liberties focused political
party, and by MTV to rank music videos.  This level of real-world
usage is unusual for a new, academically proposed and studied voting
system, but it makes Schulze voting more compelling to analyze.

As a typical Condorcet system, the winners are determined by examining
the pairwise contests between candidates.  We will thus
introduce some useful functions and notation.  The \emph{advantage
  function} is a function on pairs of candidates where $adv(a,b)$ is
the number of voters in the given election that prefer $a$ to $b$
(that is, rank $a$ higher than $b$ in their preferences).  The
\emph{net advantage function} gives the net difference in advantage
for one candidate over another.  We define the net advantage between
$a$ and $b$ to be the following: $netadv(a,b) = adv(a,b) - adv(b,a)$.

The winners in a Schulze election are determined as follows.  Generate
the net advantage scores for the election, and represent this data as
a graph, with vertices for candidates and directed edges denoting net
advantage scores.  Determine the strongest
paths in this graph between every pair of candidates by the following
metric: the weight of a path is the lowest weight edge in the path.
This is the ``bottleneck'' metric.  We can adapt the Floyd-Warshall
dynamic programming algorithm to find the weigh of such paths in
polynomial time~\cite{sch:j:schulze}.

\begin{figure}[h]
\begin{algorithmic}
\Procedure{Schulze Paths}{$netadv$}
\State initialize $paths$ to $netadv$
\For{$k = 1$ to $m$}
\For{$i = 1$ to $m$}
\For{$j = 1$ to $m$}
\If{$i == j$} 
\State next 
\EndIf
\State $newpath \gets min(paths[i][k], paths[k][j])$
\State $paths[i][j] \gets max(newpath, paths[i][j])$
\EndFor
\EndFor
\EndFor
\State \Return $paths$
\EndProcedure
\end{algorithmic}
\caption{Procedure to calculate Schulze best-path scores, adapted from
  Floyd-Warshall algorithm.}
\end{figure}

Once we have the best path
weights, we build another graph with the same vertices where there is
a directed edge from $a$ to $b$ if the best path from $a$ to $b$ is
better than the best path from $b$ to $a$.  The winners of the
election are the candidates with no in-edges in this final graph.

Note that a Condorcet winner will have positive net advantage edges to
every other candidate, and so they will easily have the better paths
to every other candidate and be the only winner of the election.

% TODOlater better example  
\subsubsection{Example Election}

Consider an election over candidates $\{a,b,c\}$ with the
following voters.  

%% $a > b > c$
%% $c > a > b$

%% $c > a > b$
%% $b > c > a$

%% $b > c > a$
%% $b > c > a$

%% $a > b > c$
%% $a > b > c$

\vspace{1em}
\begin{tabular}{c|c}
\# Voters & Preferences \\
\hline
3 & $a > b > c$ \\
3 & $b > c > a$ \\
2 & $c > a > b$ \\
\end{tabular}
\vspace{1em}

These voters express the following net advantage function over the
candidates.

\vspace{1em}
\begin{tabular}{c|ccc}
& $a$ & $b$ & $c$ \\
\hline
$a$ &    & 2  & -2  \\
$b$ &  -2 &   & 4 \\
$c$ &   2 &  -4&  \\
\end{tabular}
\vspace{1em}

We can also represent this election as a  graph as follows.

 % TODOlater improve figure and make a better example, possibly with four
 % candidates.  make arrowheads bigger

\begin{figure}[h]
\centering 
\scalebox{1} % Change this value to rescale the drawing.
{
\begin{pspicture}(0,-1.7432812)(3.02875,1.7432812)
\psdots[dotsize=0.12](0.38,1.1017188)
\psdots[dotsize=0.12](0.38,-1.2982812)
\psdots[dotsize=0.12](2.38,-0.09828125)
\psline[linewidth=0.04cm,arrowsize=0.05291667cm
  2.0,arrowlength=1.4,arrowinset=0.4]{->}(0.38,1.1017188)(2.38,-0.09828125)
\psline[linewidth=0.04cm,arrowsize=0.05291667cm
  2.0,arrowlength=1.4,arrowinset=0.4]{->}(2.38,-0.09828125)(0.38,-1.2982812)
\psline[linewidth=0.04cm,arrowsize=0.05291667cm
  2.0,arrowlength=1.4,arrowinset=0.4]{->}(0.38,-1.2982812)(0.38,1.1017188)
\usefont{T1}{ptm}{m}{n}
\rput(0.063125,1.6117188){$a$}
\usefont{T1}{ptm}{m}{n}
\rput(2.8679688,0.01171875){$b$}
\usefont{T1}{ptm}{m}{n}
\rput(0.0578125,-1.5882813){$c$}
\usefont{T1}{ptm}{m}{n}
\rput(1.6634375,0.81171876){2}
\usefont{T1}{ptm}{m}{n}
\rput(1.675625,-1.1882813){4}
\usefont{T1}{ptm}{m}{n}
\rput(0.0634375,0.01171875){2}
\end{pspicture} 
}
\caption{Election graph for the example Schulze election.}
\end{figure}

Now we must find the best paths between pairs of candidates.  Note
that our three candidates are in a Condorcet cycle and there is no
clearly dominant candidate: each of them lose to one other
candidate, and each candidate has a path to both of the other candidates.  The
following are the Schulze path scores.

\vspace{1em}
\begin{tabular}{c|ccc}
& $a$ & $b$ & $c$ \\
\hline
$a$ &  & 2 & 2\\
$b$ & 2 &  & 4\\
$c$ & 2 & 2 & \\
\end{tabular}
\vspace{1em}

Candidates $a$ and $b$ have equally good paths to each other, as do
$a$ and $c$.  Candidate $b$ has a strictly stronger path to $c$ than
$c$ does to $b$.  So we can see that the winners will be $a$ and $b$,
as neither of these candidates are beaten in best-path strength.

\subsection{Election Manipulative Actions}

There are several classes of manipulative action problems studied in
computational social choice.  The primary classes of these problems
are \emph{manipulation}, where a voter or voters strategically vote to
affect the result of an election~\cite{bar-tov-tri:j:manipulating},
\emph{bribery}, where an outside briber pays off voters to change
their votes~\cite{fal-hem-hem:j:bribery}, and \emph{control}, where an
election organizer alters the structure of an election to change the
result~\cite{bar-tov-tri:j:control}.  We will later describe
manipulation and control in more detail.  

These problems are
formalized as decision problems in the way typical in computer science
and we study their complexity in various voting systems.  There is
some standard terminology about the behavior of voting systems under
manipulative actions.  A voting system is \emph{immune} to a
manipulative action if it can never change the result of an election
in the voting system, and it is \emph{susceptible} otherwise.  If a
voting system is susceptible to some action and the decision problem
is in P, then it is \emph{vulnerable} to it, while if the problem
is NP-hard, the voting system is \emph{resistant} to it.

The original versions of these problems are the \emph{constructive}
cases, where the goal is to make a particular candidate win.  The
alternative class of problems are the \emph{destructive} cases, where
the goal is to make a particular candidate not win.  These were
introduced by introduced by Conitzer et
al.~\cite{con-lan-san:j:few-candidates} in the context of
manipulation and later by Hemaspaandra et al.\ in the context of
control~\cite{hem-hem-rot:j:destructive-control}.  Though the
constructive goal may be more desirable, the destructive version of a
problem may be easy where the constructive version is not, so
investigating both is useful.

Additionally, there are differing versions of these problems based on
our tie-handling philosophy.  The work presented here follows the
nonunique winner model, where we claim success in our manipulative
action when the preferred candidate is one of possibly several winners
in the election in the constructive case, or is not among the set of
winners in the election in the destructive case.  In contrast, the
original paper exploring control~\cite{bar-tov-tri:j:control} used the
unique winner model, where the goal is to cause a candidate to be the
only winner of an election, or in the destructive cases for the
candidate to be not a unique winner. Both models are common in the
literature.

\subsection{Manipulation}

The manipulation problem is the most basic of the election
manipulative action problems.  The manipulation problem models
strategic voting, where a voter or set of voters attempt to vote in
some way (not necessarily reflecting their true preferences) to sway
the result of the election and cause their favorite candidate to win,
or else to cause some hated candidate to lose.  The computational
complexity of this problem was first studied by Bartholdi, Tovey, and
Trick~\cite{bar-tov-tri:j:manipulating}.  We will now formally define
the manipulation problem in the nonunique-winner model.

\begin{paragraph}{Manipulation}
\begin{description}
\item[Given] An election $(C,V)$, a set of manipulators $M$, and a
  distinguished candidate $p$.
\item[Question (Constructive)] Is there a way to assign the votes of
  $M$ such that $p$ is a winner of the election $(C, V \cup M)$?
\item[Question (Destructive)] Is there a way to assign the votes of
  $M$ such that $p$ is not a winner of the election $(C, V \cup M)$?
\end{description}
\end{paragraph}

\subsection{Control}

Control encompasses actions taken by an election chair to change the
structure of an election to achieve a desired result.  This can
include adding or deleting candidates or voters, or partitioning
either candidates or voters and performing initial subelections.  The
study of election control was initiated by Bartholdi, Tovey, and
Trick~\cite{bar-tov-tri:j:control}, and many subsequent papers have
investigated the complexity of control in various voting systems.

Much of the study of control has had the goal of finding voting
systems that are highly resistant to control.  Faliszewski et
al.\ showed that related systems Llull and Copeland voting are
resistant to every case of constructive
control~\cite{fal-hem-hem-rot:j:llull}.  Hemaspaandra et
al.\ constructed unnatural hybrid voting systems that resist every
case of control, proving that such systems can
exist~\cite{hem-hem-rot:j:hybrid}.  Erd{\'e}lyi et al.\ showed that
the system fallback voting resists 20 out of the 22 standard cases of
control, and it stands as the most resistant natural voting
system~\cite{erd-rot:c:fallback, erd-fel:c:param,
  erd-pir-rot:c:open-probs}.   No natural system resistant to every case
of control has yet been found.

The various control problems loosely model many real-world actions.
The cases of adding and deleting voters correspond to voter
registration drives and voter suppression efforts.  The cases of
adding and deleting candidates correspond to ballot-access procedures
that effectively remove many candidates from elections.  Cases of
partitioning voters are similar to the real-world practice of
gerrymandering (though that has additional geographic constraints),
and cases of partitioning candidates correspond to primary elections
or runoffs.  We will now formally define the various cases of control,
in the nonunique-winner model.

\paragraph{Control by Adding Candidates}
\begin{description}
%% \item[Given] An election $E=(C,V)$, a distinguished candidate $w \in C$, a
%%   spoiler candidate set $D$, and $k \in \mathbb{N}^+$.
\item[Given] Disjoint candidate sets $C$ and $D$, a voter set $V$ with
  preferences over $C \cup D$, a distinguished candidate $p \in C$,
  and $k \in \mathbb{N}$.
\item[Question (Constructive)] Is it possible to make $p$ a winner of
  an election $(C \cup D', V)$ with some $D' \subseteq D$ where
  $\|D'\| \leq k$?
\item[Question (Destructive)] Is it possible to make $p$ not a  winner of an election
  $(C \cup D', V)$ with some $D' \subseteq D$ where $\|D'\| \leq k$?
\end{description}

\paragraph{Control by Adding an Unlimited Number of Candidates}
\begin{description}
\item[Given] Disjoint candidate sets $C$ and $D$, a voter set $V$ with
  preferences over $C \cup D$, and a distinguished candidate $p \in
  C$.
\item[Question (Constructive)] Is it possible to make $p$ a winner of
  an election $(C \cup D', V)$ with some $D' \subseteq D$?

\item[Question (Destructive)] Is it possible to make $p$ not a winner
  of an election $(C \cup D', V)$ with some $D' \subseteq D$?
\end{description}

\paragraph{Control by Deleting Candidates}
\begin{description}
\item[Given] An election $E=(C,V)$, a distinguished candidate $p \in
  C$, and $k \in \mathbb{N}$.
\item[Question (Constructive)] Is it possible to make $p$ a  winner of an election
  $(C-C', V)$ with some $C' \subseteq C$ where $\|C'\| \leq k$?

\item[Question (Destructive)] Is it possible to make $p$ not a winner
  of an election $(C-C', V)$ with some $C' \subseteq (C - \{p\})$
  where $\|C'\| \leq k$?
\end{description}

\paragraph{Control by Adding Voters}
\begin{description}
%% \item[Given] An election $E=(C,V)$, a distinguished candidate $p \in
%%   C$, an additional voter set $U$, and $k \in \mathbb{N}^+$.
\item[Given] A candidate set $C$, disjoint voter sets $V$ and $W$, a
  distinguished candidate $p \in C$, and
  $k \in \mathbb{N}$.
\item[Question (Constructive)] Is it possible to make $p$ a  winner of an
  election $(C, V \cup W')$ for some $W' \subseteq W$ where $\|W'\| \leq
  k$?

\item[Question (Destructive)] Is it possible to make $p$ not a  winner of an
  election $(C, V \cup W')$ for some $W' \subseteq W$ where $\|W'\| \leq
  k$?
\end{description}

\paragraph{Control by Deleting Voters}
\begin{description}
\item[Given] An election $E=(C,V)$, a distinguished candidate $p \in
  C$, and $k \in \mathbb{N}$.
\item[Question (Constructive)] Is it possible to make $p$ a winner of an
  election $(C, V - V')$ for some $V' \subseteq V$ where $\|V'\| \leq
  k$?
\item[Question (Destructive)] Is it possible to make $p$ not a  winner of an
  election $(C, V - V')$ for some $V' \subseteq V$ where $\|V'\| \leq
  k$?
\end{description}

\paragraph{Control by Partition of Candidates}
\begin{description}
\item[Given] An election $E=(C,V)$ and a distinguished candidate $p
  \in C$.
\item[Question (Constructive)] Is there a partition $(C_1, C_2)$ of $C$
  such that $p$ is a final winner of the election $(D \cup C_2, V)$,
  where $D$ is the set of candidates surviving the initial subelection
  $(C_1, V)$?

\item[Question (Destructive)] Is there a partition $C_1, C_2$ of $C$
  such that $p$ is not a final winner of the election $(D \cup C_2,
  V)$, where $D$ is the set of candidates surviving the subelection
  $(C_1, V)$?
\end{description}

\paragraph{Control by Runoff Partition of Candidates}
\begin{description}
\item[Given] An election $E=(C,V)$ and a distinguished candidate $p
  \in C$.
\item[Question (Constructive)] Is there a partition $C_1, C_2$ of $C$
  such that $p$ is a final winner of the election $(D_1 \cup D_2, V)$,
  where $D_1$ and $D_2$ are the sets of surviving candidates from the
  subelections $(C_1, V)$ and $(C_2, V)$?

\item[Question (Destructive)] Is there a partition $C_1, C_2$ of $C$
  such that $p$ is not a final winner of the election $(D_1 \cup D_2,
  V)$, where $D_1$ and $D_2$ are the sets of surviving candidates from
  the subelections $(C_1, V)$ and $(C_2, V)$?
\end{description}

\paragraph{Control by Partition of Voters}
\begin{description}
\item[Given] An election $E=(C,V)$ and a distinguished candidate $p
  \in C$.

\item[Question (Constructive)] Is there a partition $V_1, V_2$ of $V$
  such that $p$ is a final winner of the election $(D_1 \cup D_2, V)$
  where $D_1$ and $D_2$ are the sets of surviving candidates from the
  subelections $(C, V_1)$ and $(C, V_2)$?

\item[Question (Destructive)] Is there a partition $V_1, V_2$ of $V$
  such that $p$ is not a final winner of the election $(D_1 \cup D_2,
  V)$ where $D_1$ and $D_2$ are the sets of surviving candidates from
  the subelections $(C, V_1)$ and $(C, V_2)$?
\end{description}

\section{Results}

Parkes and Xia studied the complexity of the manipulative action
problems for Schulze voting~\cite{par-xia:c:ranked-pairs}.  They
proved resistance for Schulze voting for bribery and for some of the
control cases, and showed that constructive manipulation is easy with
a single manipulator.  Recently Gaspers et al.\ showed that
coalitional manipulation is easy as well~\cite{gas-kal-nar-wal:c:schulze}.
However several control cases as well as
constructive manipulation with multiple manipulators remained open.
We now will present our results on control and manipulation for
Schulze voting.  Our new results are the
following:

% TODOlater references
\begin{itemize}
\item In the nonunique-winner model, it is never necessary to make
  multiple manipulators vote differently from each other to
  successfully perform manipulation, while this does not hold for the
  unique-winner model.
\item Schulze voting is resistant to constructive control by unlimited
  adding of candidates.
\item Schulze voting is resistant to constructive control by deleting
  candidates.
\item Schulze voting is resistant to constructive control by
  partition/runoff partition of candidates, ties promote or ties
  eliminate.
\item Schulze voting is vulnerable to destructive control by
  partition/runoff partition of candidates, ties promote or ties
  eliminate.
\item Schulze voting is resistant to constructive or destructive control by
  partition of voters, ties promote or ties eliminate.  
\end{itemize}

Table~\ref{control-table} shows the behavior of Schulze voting, as
well as several other voting systems, under control.  

% TODO explain inclusions here.  

% plurality - approval - fallback - schulze
% plurality as the most common system
% approval for contrast as having immunity
% fallback to show best known (in terms of numbers)
% schulze of course

% TODO make this look nicer.  Overall format is good but it looks
% awkward on the page due to its width.  Sidecap would maybe work
% better?  Could possibly add more stuff.
% TODO longer caption, explaining inclusions.
\begin{table*}[htp]
  \centering
  \begin{tabular}{|l|l||l|l|l|l|l|l|l|l|}
    \hline
    Control by & Tie & 
    \multicolumn{2}{|c|}{Plurality} & \multicolumn{2}{|c|}{Approval} &  \multicolumn{2}{|c|}{Fallback} &
    \multicolumn{2}{|c|}{Schulze} \\
    & Model & C& D & C& D & C& D & C& D \\
    \hline \hline
    Adding Candidates & & R & R & I & V  & R & R & R & S\\
    \hline
    Adding Candidates (unlimited)  & & R & R & I & V & R & R & \textbf{R} & S \\
    \hline
    Deleting Candidates & & R & R & V & I & R & R & \textbf{R} & S\\
    \hline
    Partition of Candidates & TE & R & R & V & I  & R & R & \textbf{R} & \textbf{V}\\
    \hline
    & TP & R & R & I & I  & R & R & \textbf{R} & \textbf{V} \\
    \hline
    Run-off Partition of Candidates & TE & R & R & V & I  & R & R & \textbf{R}    & \textbf{V}\\
    \hline
    & TP & R & R & I & I  & R & R & \textbf{R} & \textbf{V}\\
    \hline
    Adding Voters & & V & V & R & V & R & V & R & R\\
    \hline
    Deleting Voters & & V & V & R & V & R & V & R & R\\
    \hline
    Partition of Voters & TE & V & V & R & V & R & R & \textbf{R} & \textbf{R}\\
    \hline
    & TP & R & R & R & V & R & R & \textbf{R} & \textbf{R} \\
    \hline
  \end{tabular}
  \caption[Control behavior under Schulze voting and other voting
    systems for comparison.]{\label{control-table}Control behavior
    under Schulze voting and other voting systems for comparison.  V,
    R, S, and I stand for vulnerable, resistant, susceptible, and
    immune.  Results proved in this work in bold, other results 
    from~\protect\cite{erd-rot:c:fallback,erd-fel:c:param,erd-pir-rot:c:open-probs,hem-hem-rot:j:destructive-control,par-xia:c:ranked-pairs}.}
\end{table*}

\subsection{Manipulation}

Parkes and Xia proved that manipulation is easy for Schulze voting
with a single manipulator~\cite{par-xia:c:ranked-pairs}.  Their
algorithm was designed for the unique-winner model but as they note it
is easily adaptable to the nonunique-winner model as well.  It is not,
however, easily adaptable to cases with multiple manipulators.  With a
single manipulator, it can be seen whether manipulation is possible
simply by checking if the relative Schulze scores between $p$ and
other candidates are such that it is feasible.  That is, in the
unique-winner model, checking that no other candidate is beating $p$
in Schulze score, or in the nonunique-winner model, checking that no
candidate is beating $p$ in Schulze score by more than two.  With two
or more manipulators this is not the case in either tie-handling
model.

% TODOlater maybe do diagram?

In the nonunique winner model, in positive manipulation instance all
manipulators in the coalition can always vote identically, while in
the unique model they may sometimes have to vote differently to
succeed.  We will demonstrate the first point.  

By including a single manipulator vote, we shift all of the net
advantage values in the election by one in either direction.  With $m$
manipulators, if they are given the same votes, all of the net
advantage values will shift by $m$.  If we instead give manipulators
differing votes, some or all of the net advantage scores will change
by less than $m$.  So we would have to assign the manipulator votes
differently if there was some pair of candidates where we needed to
alter their net advantage by less than $m$.  We will show that that
would never be necessary.  

Suppose we have a positive Schulze manipulation instance with $p$ as
the preferred candidate and with $m$ manipulators, and there are
candidates $b$ and $c$ such that we must alter $netadv(b,c)$ by less
than $m$ to make $p$ a winner.  We will be prevented from decreasing
$netadv(b,c)$ by $m$ if doing so will excessively weaken a path from
$p$ to some $d$.  This will be the case if $(b,c)$ is on the best path
from $p$ to $c$, $p$ has a path to $b$ that after manipulation is
greater than $netadv(b,c) - m$ in strength, and there is an equally
strong path from $c$ to $d$, but also there is a path from $c$ to
$p$ that will be greater than $netadv(b,c) - m$ in strength.

We would be prevented from increasing $netadv(b,c)$ by $m$ if doing so
would strengthen a path from some $a$ to $p$ to be greater than
$netadv(p,a)$ after manipulation. This would happen if $(b,c)$ is a
bottleneck on the strongest path from $a$ to $p$, with other edges
stronger as well.  The strongest path from $c$ to $p$ would have to be
at least $netadv(b,c) + m$ in strength after manipulation.  But if
this is the case, with $(b,c)$ also being a bottleneck from $p$ to
some candidate, and therefore with the best path to $c$ containing
$(b,c)$, $p$ would lose to $c$ if we do not increase $(b,c)$ by $m$.
But we assumed we could make $p$ a winner by changing $netadv(b,c)$ by
less than $m$, so this is a contradiction.

% TODOlater diagram of this.

Schulze manipulation in the unique-winner model does not have this
property and there are instances where it is necessary to have
manipulators vote differently.  Consider the following instance with
two manipulators.  Again we will have candidates $b$ and $c$ such that
$(b,c)$ is the edge of interest in the election graph.  A candidate
$a$ has a strong path to $b$, and has $(b,c)$ as a bottleneck edge in
their best path to a candidate $d$.  $d$ has a path to $b$ of weight
$netadv(b,c)$.  $a$ is performing strongly against all other
candidates, such that all but $d$ can at best tie them in Schulze
score after manipulation.  The preferred candidate $p$ needs
$netadv(b,c)$ to stay at least where it is so they can beat $c$, but
increasing it will not cause any problems.  So now the problem is to
make $p$ not just a winner but a unique winner.  We can't decrease
$netadv(b,c)$ without making $p$ lose, but if we increase it, $a$ will be at least tied with
every other candidate and they will also be a winner.  The only way to
make $p$ a unique winner is to keep $(b,c)$ unchanged while boosting
the path from $d$ to $a$ such that $a$ loses to $d$.  So to make $p$ a
unique winner, we would have to have one manipulator rank $b$ over $c$
while the other ranks $c$ over $b$.

\subsection{Control}

\subsubsection{Constructive Control by Adding/Unlimited Adding of Candidates
  Case}

\begin{theorem}Schulze voting is resistant to constructive control by
    adding of candidates and constructive control by unlimited adding
    of candidates.
\end{theorem}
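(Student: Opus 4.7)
The plan is to prove both resistance results via a single polynomial-time reduction from 3SAT, building on a common construction with a small modification that handles the unlimited variant. Starting from an instance with variables $x_1,\ldots,x_m$ and clauses $Cl$, I would construct an election over $C$ containing the distinguished candidate $p$, one candidate $c$ per clause, and one ``variable'' candidate $x'$ per variable, together with an addable pool $D$ containing two ``literal'' candidates $x^+$ and $x^-$ per variable. Using the McGarvey construction, I would realize the following net-advantages: every $c$ and every $x'$ beats $p$ by $2$; each $x^+$ and $x^-$ beats $x'$ by $4$; $x^+$ beats every clause candidate whose clause is satisfied by setting $x$ true (and likewise $x^-$ for $x$ false) by $4$; and $p$ beats every literal candidate by $4$.

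For the bounded-adding version with budget $k=m$, I would first argue that in the initial election every $x'$ and every $c$ gives a path of weight $2$ to $p$ with no counter-path, so the chair must introduce at least one literal per variable just to block the $x'$'s. Since the budget equals $m$, exactly one of $\{x^+,x^-\}$ per variable can be added, so any successful addition set encodes a truth assignment $\alpha$. Then $p$ has a strength-$4$ path to each clause candidate iff $\alpha$ satisfies every clause; combined with the strength-$4$ paths through literals to each $x'$, this gives $p$ winning iff 3SAT is satisfiable. The forward direction mirrors this: a satisfying $\alpha$ picks $m$ literal candidates creating the required paths.

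The principal obstacle, and the reason the unlimited case needs more care, is that with no budget the chair can add both $x^+$ and $x^-$ for every variable and thereby get strength-$4$ paths from $p$ to every clause candidate \emph{regardless} of satisfiability. To fix this, I would add a small ``penalty gadget'' that activates whenever both literals of some variable are simultaneously present. Concretely, I would include for each variable $x$ an auxiliary candidate $y_x\in C$ together with edges engineered so that: $y_x$ beats $p$ by $2$ already; $p$ has a path to $y_x$ through any single literal of $x$, but the pairwise edges between $x^+$ and $x^-$ are arranged (via McGarvey) to create a strong cycle that strictly strengthens $y_x$'s best path to $p$ when both literals are present, so $p$ cannot win. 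The quantitative weights must be tuned so that this penalty dominates only in the ``both added'' case while leaving the single-literal case behaving exactly as in the bounded version.

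With such a gadget in place, the analysis of the unlimited case reduces to the bounded one: in any winning chair strategy we must add at least one literal per variable (to beat $x'$) but cannot add both (to avoid the penalty), so exactly one literal per variable is chosen, and the satisfiability argument carries over verbatim. The construction is polynomial-time in $|U|+|Cl|$, establishing NP-hardness and hence resistance for both constructive control by adding candidates and by unlimited adding of candidates. The delicate step is tuning the weights of the penalty gadget so that the ``good'' case (single literal per variable yielding a satisfying assignment) is unaffected; I would verify this by explicitly computing the relevant best-path weights between $p$, each $y_x$, each $x'$, and each clause candidate under both scenarios.
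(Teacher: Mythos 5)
Your construction is essentially the paper's: a reduction from 3SAT with literal candidates $x^+,x^-$ in the addable pool, variable candidates $x'$ that beat $p$ by $2$ and force at least one literal per variable to be added, and clause candidates that beat $p$ by $2$ and are reachable from $p$ only through literals that satisfy them. The one place you diverge is in how you enforce ``at most one literal per variable.'' You use two separate mechanisms --- the budget $k=m$ in the bounded case, and a per-variable penalty gadget in the unlimited case --- whereas the paper uses a single uniform construction: it sets the bound to $\|D\|$ (so the budget is never binding) and relies on the penalty gadget in both variants, which is why one reduction serves both theorems. Your penalty gadget is the right idea but is left under-specified and is oriented awkwardly: you have $y_x$ beat $p$ by $2$ from the start, which saddles you with the extra obligation of giving $p$ a compensating path to $y_x$ through an added literal, and you then need to argue that adding both literals breaks $p$ without breaking the single-literal case. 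The paper's gadget avoids this by flipping the direction: a candidate $x_i''$ with $p$ beating $x_i''$ by $2$, and a chain $x_i''$ beats $x_i^-$, $x_i^-$ beats $x_i^+$, $x_i^+$ beats $p$, all by $4$. Then $x_i''$ is harmless unless \emph{both} literals are added, at which point it acquires a strength-$4$ path to $p$ against $p$'s fixed strength-$2$ path back, and nothing needs to be ``tuned'' in the single-literal case. (The paper also routes $p$'s access to the literal candidates through a helper $a$ with $p$ beating $a$ and $a$ beating every literal by $4$, since the literals must beat $p$ directly for the chain to work; in your version $p$ beats the literals directly, which is fine for the bounded case but would conflict with the chain.) So: the approach is correct and matches the paper's in all essential respects, but to close the unlimited case you should replace the hand-waved weight tuning with a concrete gadget such as the $x_i''$ chain above, and verify the resulting best-path table as you propose.
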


\begin{proof}
We will prove this case through a reduction from 3SAT.  Given a 3SAT
instance $(U, Cl)$, we will construct a control instance $(C,D,p,k)$
as follows.  The original candidate set $C$ will be the following: 

\begin{itemize}
\item The distinguished candidate $p$;
\item An additional special candidate $a$.
\item A candidate for each clause $c_i$.
\item For each variable $x_i$, a candidate $x_i'$.
\item For each variable $x_i$, a candidate $x_i''$.
\end{itemize}

The auxiliary candidate set $D$ will consist of the following: 

\begin{itemize}
\item The ``literal'' candidates: For each variable $x_i$, candidates $x_i^+$ and $x_i^-$
\end{itemize}

The voter set $V$ will be constructed such that we have the following
relationships between the candidates (both those in $C$ and in $D$):  

\begin{itemize}
\item 
  For every clause $c_i$, candidate $c_i$ beats $p$ by 2 votes.
\item 
  For every variable $x_i$, candidate $x_i'$ beats $p$ by 2 votes.
\item 
  For every variable $x_i$,  $p$ beats candidate $x_i''$ by 2 votes.
\item 
  For every variable $x_i$, candidates $x_i^+$ and $x_i^-$ beat $x_i'$ by
  4 votes. 
\item
  For every variable $x_i$, $x_i''$ beats $x_i^-$, $x_i^-$
  beats $x_i^+$, and $x_i^+$ beats $p$, all by four votes.  
\item 
  For every variable $x_i$, candidates $x_i^+$ beats every clause
  candidate $c_i$ where the corresponding clause is satisfied by $x_i$
  assigned to true, by 4 votes.
\item 
  For every variable $x_i$, candidates $x_i^-$ beats every clause
  candidate $c$ where the corresponding clause is satisfied by $x_i$
  assigned to false, by 4 votes.
\item 
  $p$ beats $a$ by 4 votes.  
\item 
  For every variable $x_i$, $a$ beats candidates $x_i^+$ and $x_i^-$ by
  4 votes.
\end{itemize}

In the limited case, the bound will be equal to $||D||$, the size of
the auxiliary candidate set.

% TODO diagram

We will show that $p$ can be made a winner through adding candidates
if and only if there is a satisfying assignment for the 3SAT
instance. 

Given a satisfying assignment for the 3SAT instance, build the
corresponding set of added candidates, where $x_i^+$ is included if
$x_i$ is set to true, and $x_i^-$ is included otherwise.  We can show
that $p$ will be a winner with this set of candidates added.  Since
these candidates correspond to a satisfying assignment, and every
literal candidate $x_i^+$ or $x_i^-$ beats by 4 every clause candidate
they satisfy, and since $p$ has a strong path to each literal
candidate (through $a$), there will be  a strong path from $p$ to
each clause candidate $c_i$, stronger than the strength-of-2 edge from
$c_i$ to $p$.  Since we have added exactly one of $x_i^+$, $x_i^-$ for each
$x_i$, we create a strong path from $p$ to the $x_i'$ candidate, and
we prevent there from being a strong path from $x_i''$ to $p$, as such
a path would have to go through both of those candidates.  Thus no
candidate will have a better path to $p$ than $p$ has to that
candidate, so $p$ will be a winner.

We will show that if we map to a positive control instance, we must
have mapped from a satisfiable 3SAT instance.  To make $p$ a winner of
the election, we must ensure $p$'s paths to other candidates are at
least as strong as other candidates' paths to $p$.  $p$ immediately
has strong paths to any of the added literal candidates.  To beat each
of the clause candidates we must add literal candidates that give $p$
a path to each of them.  To beat the $x_i'$ candidates we must add at
least one of $x_i^+$ or $x_i^-$ for each $x_i$.  To beat the $x_i''$
candidates we must not add both of $x_i^+$ and $x_i^-$ for each
$x_i$.  Thus to have a successful control instance, we must set every
variable to one of true or false in such a way that satisfies each
clause.  Thus if we have a positive control instance we must have a
positive 3SAT instance.  
% TODO Diagrams?  
\end{proof}

\subsubsection{Constructive Control by Deleting Candidates Case}

\begin{theorem}
Schulze voting is resistant to constructive control by deleting candidates.
\end{theorem}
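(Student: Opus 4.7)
The plan is a polynomial-time many-one reduction from 3SAT, structurally dual to the adding-candidates reduction above. Given a 3SAT instance with variables $x_1,\ldots,x_n$ and clauses $c_1,\ldots,c_m$, I would build an election whose candidate set contains the distinguished candidate $p$, one candidate $c_j$ per clause, a pair of literal candidates $x_i^+$ and $x_i^-$ per variable, and auxiliary candidates $x_i'$ and $x_i''$ (and perhaps a hub candidate $a$) that encode the ``exactly one literal per variable survives'' constraint. The deletion budget would be $k=n$.

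The pairwise net advantages would be calibrated with just two magnitudes (say $2$ and $4$) so that every clause candidate has a direct strength-$2$ advantage over $p$, while $p$'s paths back to each $c_j$ must route through surviving literal candidates corresponding to literals that satisfy $c_j$ (giving strength $4$ iff at least one satisfying literal remains). The variable gadget would be arranged so that (a) if both $x_i^+$ and $x_i^-$ survive, then $x_i''$ obtains a strength-$4$ path to $p$ (through the $x_i^-,x_i^+$ chain) that overcomes $p$'s strength-$2$ direct edge to $x_i''$, and (b) if neither $x_i^+$ nor $x_i^-$ survives, then $x_i'$ retains a stronger path to $p$ than $p$ has to $x_i'$. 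Combined with $k=n$, this forces any feasible deletion to delete exactly one literal from each variable pair, thereby encoding a truth assignment.

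The forward direction is straightforward: from a satisfying assignment, delete the ``false'' literal of each variable, obtaining a deletion set of size $n$; by construction $p$ then has a strength-$4$ path to every clause candidate through some surviving satisfying literal, and every variable gadget's competing path is broken, so $p$ becomes a Schulze winner. For the converse, I would argue that any deletion set of size at most $n$ that makes $p$ a winner cannot productively delete $p$, a $c_j$, or an auxiliary candidate in isolation (each such deletion leaves some other ``bad'' path intact under the chosen weights), so it must remove exactly one of $\{x_i^+, x_i^-\}$ per variable; the resulting literals must additionally hit every clause, because otherwise some $c_j$ retains strength-$2$ advantage over $p$ with no compensating strength-$\geq 3$ path from $p$. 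Hence the surviving literals form a satisfying assignment.

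The main obstacle is engineering the edge weights and auxiliary gadgets so that the Schulze bottleneck closure yields exactly the intended strong paths and no others. Concretely, I must verify that deletions outside the literal candidates can never be strictly helpful, that every partial or over-eager literal deletion leaves at least one of $x_i'$, $x_i''$, or $c_j$ with a surviving strength-$\geq 3$ path to $p$, and that no long chain of weight-$4$ edges through unrelated gadgets accidentally lets some non-literal candidate reach $p$ around a deletion. Ruling out these spurious paths is the delicate bookkeeping step, and is where the bulk of the proof effort will go.
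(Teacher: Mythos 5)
Your reduction has a genuine gap in the converse direction, and it is exactly at the step you flag as ``delicate bookkeeping'': you assert that deleting an auxiliary candidate ``in isolation'' is never productive, but in your construction it is. For each variable $i$, the only thing forcing the controller to spend a deletion inside variable $i$'s gadget is that $x_i''$ otherwise has a strength-4 chain $x_i''\rightarrow x_i^-\rightarrow x_i^+\rightarrow p$ overcoming $p$'s strength-2 edge. That chain can be broken by deleting $x_i^+$, or $x_i^-$, \emph{or $x_i''$ itself} --- all at the same cost of one deletion. If the controller deletes $x_i''$ instead of a literal, both $x_i^+$ and $x_i^-$ survive, constraint (b) is vacuous (it only bites when \emph{neither} literal survives), and $p$ now gets strength-4 paths to every clause containing $x_i$ \emph{or} $\neg x_i$. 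Doing this for every variable uses exactly the budget $k=n$ and ``satisfies'' every clause regardless of whether the formula is satisfiable, so every 3SAT instance would map to a positive control instance. The tight budget does not save you, because the cheating deletion and the honest deletion cost the same. To repair this you need a mechanism making the auxiliary threats undeletable, and the standard one is duplication: replace each candidate that must survive by $k+1$ interchangeable copies, so that deleting the threat directly exceeds the budget.

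This is in fact how the paper's proof works, and its encoding is otherwise quite different from yours: it creates one candidate per literal \emph{occurrence} (three per clause), chains them as $c_i\rightarrow x_i^1\rightarrow x_i^2\rightarrow x_i^3\rightarrow p$ with all edges of weight 2, sets $k=\|Cl\|$ so that breaking all $m$ clause chains forces exactly one literal-occurrence deletion per clause (deletion meaning ``set this literal true''), and adds $k+1$ copies of each clause candidate and of a ``negation'' candidate for each conflicting pair of occurrences; deleting two complementary occurrences destroys $p$'s only path (via the hub $a$) to the corresponding negation candidates, which beat $p$ directly. The duplication is what makes the case analysis of which deletions are ``productive'' go through; without an analogue of it, your argument that the deletion set must consist of exactly one of $\{x_i^+,x_i^-\}$ per variable does not hold.
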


\begin{proof}{CC-DC Case}

We prove this case through a reduction from 3SAT.
  Given a 3SAT instance $(U, Cl)$ we will construct a control
  instance $((C,V),p,k)$ as follows. Our deletion limit $k$ will be
  equal to $||Cl||$.  The candidate set will be the following:

\begin{itemize}
\item Our special distinguished candidate $p$.
\item For each clause $c_i \in Cl$, $k+1$ candidates $c_i^1, \dots
  c_i^{k+1}$.
\item A candidate for each literal $x_i^j$ in each of the clauses
  (where $x_i^j$ is the $j$th literal in the $i$th clause).
\item For each pair of literals $x_i^j, x_m^n$ where one is the negation of the
  other, $k+1$ candidates $n_{i,j,m,n}^1, \dots n_{i,j,m,n}^{k+1}$.
\item An additional auxiliary candidate $a$.
\end{itemize}
%changez: trickier changed to 'more difficult' (R1)
The $k+1$ candidates for each clause and for each conflicting pair of
literals are treated as copies of each other, and are included to
prevent successful control by simply deleting the tough opponents
rather than solving the more difficult problem corresponding to the 3SAT
instance.  

The voters will be constructed according to the McGarvey method~\cite{mcg:j:election-graph} such
that we have the following relationships between candidates.  
% changez2 reverted to itemize
\begin{itemize}
\item For the three literal candidates $x_i^1, x_i^2, x_i^3$ in a clause
  $c_i$, $c_i^j$ beats $x_i^1$ (for all $j$), $x_i^1$ beats $x_i^2$, $x_i^2$ beats $x_i^3$, and
  $x_i^3$ beats $p$, all by two votes.
\item For a pair of literal candidates $x_i^j$ and $x_m^n$ that are
  negations of each other, each beats
  $n_{i,j,m,n}^l$ (for all $l$) by two votes.  
\item Every negation candidate $n_{i,j,m,n}^l$ beats $p$ by two
  votes.  
\item $p$ beats $a$ by two votes.
\item $a$ beats every $x_i^j$ by two votes.  
\end{itemize}

% TODO diagram

This completes the specification of the reduction, which clearly can
be performed in polynomial time.
The intuition is as follows:  Deleting a literal corresponds to
assigning that literal to be true, and to make $p$ win we must ``assign''
literals that satisfy every clause without ever ``assigning'' a
variable to be both true and false, so a successful deletion
corresponds to a valid satisfying assignment.  

We will now show that if $(U,Cl)$ is a positive 3SAT instance, $p$ can be
made a winner of this election by deleting $k$ candidates.  We will
delete one literal candidate for each clause, selecting a literal that
is satisfied by the satisfying assignment for $Cl$.  This will require
us to delete $||Cl||$ candidates, which is equal to our deletion limit
$k$.  By deleting these candidates, we break all the paths from the
clause candidates to $p$, so now $p$ is tied with each clause
candidate instead of being beaten by them.  Also, since we deleted
literals that were satisfied according to a satisfying assignment, we
must not have deleted any pair of literals that were the negations of
each other, so we will still have a path from $p$ to each negation
candidate.  Thus $p$ at least ties every other candidate in Schulze
score, so they will be a winner.

If we map to a positive control instance, our 3SAT instance must be
positive as well.  First, the most serious rivals to $p$ are the
clause candidates, who each have a path of strength two to $p$ while
$p$ only has a path of strength $0$ to them.  Since there are many
duplicates of each of them, we cannot remove them directly but must
instead remove other vertices along the paths to $p$ to remove the
threat.  The deletion limit allows us to delete one candidate for
every clause.  However, we must choose which ones to delete carefully.
If we delete a literal and a different one that is the negation of
the first, we destroy our only paths to the corresponding negation
candidate.  Thus we must delete one literal for each clause, while
avoiding deleting a variable and its negation.  If there is a
successful way to do this, there will be a satisfying assignment for
the input instance that we can generate using our selection of
deleted/satisfied literals (arbitrarily assigning variables
that were not covered in this selection).
% TODOlater diagram
\end{proof}

\subsubsection{Constructive Control by Partition of Candidates Cases}
% TODO check and edit

\begin{theorem}
Schulze voting is resistant to constructive control by partition and
runoff partition of candidates, ties eliminate.
\end{theorem}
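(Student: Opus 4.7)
The plan is to reduce from 3SAT, following the same gadget-based philosophy as the adding-candidates and deleting-candidates proofs above, but designing the gadgetry so that the essential choice of the chair amounts to selecting which of the literal candidates $x_i^+$ or $x_i^-$ survive the first-round subelection(s). Given a 3SAT instance $(U,Cl)$, I will construct an election with the distinguished candidate $p$, a clause candidate $c_i$ for each clause, literal candidates $x_i^+$ and $x_i^-$ for each variable, and a small number of auxiliary ``enforcer'' candidates whose role is to eliminate or neutralize selected literal candidates during a subelection via the ties-eliminate rule. The McGarvey construction will produce a pairwise net-advantage table in which: every $c_i$ beats $p$ by a small amount (say $2$); every $p$-versus-literal path in the final round becomes strong only if exactly one of $\{x_i^+,x_i^-\}$ reaches the final; and each literal $x_i^+$ (resp.\ $x_i^-$) beats by a large margin every clause candidate whose clause it satisfies.

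The main combinatorial idea is to force that, for every variable $x_i$, exactly one of $x_i^+,x_i^-$ appears in the final election. I would do this by pairing each $x_i^+$ with a ``twin'' enforcer candidate that ties it head-to-head (or beats it in a local 3-cycle) whenever they sit in the same subelection, so under ties-eliminate both vanish; by placing $x_i^-$ in the opposing partition cell the chair effectively commits to $x_i=\text{false}$. Symmetrically for $x_i^-$. A separate enforcer will ensure that putting both $x_i^+$ and $x_i^-$ in the same cell (or both in the final round) causes $p$ to lose to some $c_i$. Since the deletion-based and addition-based reductions above already show how to make clause candidates dangerous to $p$ exactly unless a satisfying set of literal candidates sits alongside $p$, I can re-use that clause-gadget structure verbatim for the second-round election: once the partition has ``chosen'' an assignment, $p$ wins the final round iff that assignment satisfies every clause.

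For both directions: if the 3SAT instance has a satisfying assignment $\tau$, I place $x_i^+$ together with its twin enforcer in $C_1$ whenever $\tau(x_i)=\text{false}$ (so both are eliminated by the tie), and $x_i^-$ together with its twin in $C_1$ when $\tau(x_i)=\text{true}$; everything else (including $p$ and all clause candidates) goes into $C_2$, and the argument from the adding-candidates theorem shows $p$ wins. Conversely, any partition that makes $p$ the final winner must prevent every $c_i$'s net-advantage edge into $p$ from dominating $p$'s outgoing paths, which (by the same path-strength bookkeeping used in the earlier proofs) forces exactly one of $\{x_i^+,x_i^-\}$ to survive per variable and forces the surviving literals to cover every clause, producing a satisfying assignment.

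The main obstacle will be the runoff-partition variant and making the two cases go through with the same reduction: in runoff partition both cells hold a subelection, so I must arrange that the enforcer gadgets work correctly on whichever side they land in, and that the ``unused'' cell cannot smuggle a bad candidate into the final round. I expect this to be handled by making each enforcer gadget self-contained (a small Condorcet cycle among a literal, its twin, and one dummy) so that regardless of which cell it is assigned to, the gadget collapses under ties-eliminate exactly when its paired literal shares the cell. A secondary subtlety is ruling out degenerate partitions (e.g., putting $p$ alone, or putting all candidates in one cell); I will size the edge weights so that any partition that fails to resolve every variable leaves at least one $c_i$ with a dominating path to $p$ in the final round, which kills $p$. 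Once these invariants are in place, the same reduction and the same correctness argument cover both constructive control by partition of candidates and by runoff partition of candidates under ties-eliminate.
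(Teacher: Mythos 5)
Your high-level framing (reduce from 3SAT with literal and clause candidates, use ties-eliminate to discard one literal per variable) is in the right family, but the central mechanism you propose does not work as described, and it is genuinely different from what the paper does. Your plan hinges on ``self-contained'' twin/enforcer gadgets that ``collapse under ties-eliminate exactly when the paired literal shares the cell.'' Ties-eliminate is not a local, per-gadget rule: a candidate is eliminated from a subelection exactly when that subelection has no unique winner or the candidate is not among the winners, and the winner set of a Schulze subelection is determined by best paths through \emph{all} candidates in that cell. A small Condorcet cycle among a literal, its twin, and a dummy does not force all three to be non-winners of the cell they sit in --- some other candidate in the cell may be the unique winner (eliminating them regardless of the gadget), or the literal may win the whole subelection via paths that bypass the twin. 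Since there are only two cells, all of your per-variable gadgets for the rejected literals must coexist in $C_1$ and interact, and in the non-runoff case $C_2$ receives no subelection at all, so any twin or enforcer you are forced to place in $C_2$ reaches the final round unfiltered; you do not account for either issue. Most importantly, the converse direction is where the work is, and your sketch supplies no mechanism preventing the chair from succeeding by an unintended partition (e.g., dumping all clause candidates into $C_1$ so that they are eliminated by some tie having nothing to do with an assignment).

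The paper's construction solves exactly this last problem with a different idea: it adds an adversary $c'$ that has a strength-2 path to every candidate and is therefore a winner of \emph{every} subelection containing it, plus a helper $a$ that can tie $c'$ (thereby eliminating both under ties-eliminate) only if the first cell contains literal candidates forming a satisfying assignment --- $a$ needs strong paths to every clause candidate, and must not be reachable from any $x_i'$ via both $x_i^+$ and $x_i^-$. So the paper puts all dangerous candidates into the first cell and encodes satisfiability in the question ``can this subelection be made to have no unique winner?,'' rather than encoding it in which literals reach the final. If you want to salvage your inverted approach, you would need an analogous unavoidable-threat candidate to pin down the converse direction, and you would need to replace the local gadget argument with a global analysis of the winner set of each cell.
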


\begin{proof}

We will reduce from 3SAT.  Given a 3SAT instance $(Cl, U)$, we
construct a control instance $((C,V), p)$.  The candidate
set $C$ will consist of the following.  

\begin{itemize}
\item The distinguished candidate $p$.
\item For every variable $x_i \in U$, candidates $x_i^+$, $x_i^-$, and
  $x_i'$.  
\item For every clause $c_i \in Cl$, a candidate $c_i$.  
\item  Adversary candidates $c'$ and $u'$.  
\item Helper candidates $a$ and $a'$.  
\end{itemize}

The relationships between candidates will be as follows.  

\begin{itemize}
\item Every clause candidate $c$ beats $p$ by 4 votes.  
\item Every clause candidate $c$ beats $a$ by 4 votes.  
\item $c'$ beats $p$ by 2 votes.
\item $c'$ beats $a$ by 4 votes.
\item $u'$ beats $p$ by 4 votes.  
\item $a$ beats $a'$ by 4 votes.
\item $a'$ beats every $x_i^+$ and $x_i^-$ by 4 votes.  
\item $p$ beats every $x_i^+$ and $x_i^-$ by 2 votes.  
\item $c'$ beats $u'$ by 2 votes.  
\item $c'$ beats every $c_i$ by 2 votes.  
\item $u'$ beats every $x_i'$ by 4 votes.  
\item For every $x_i$, $x_i'$ beats $x_i^-$, $x_i^-$ beats $x_i^+$,
  and $x_i^+$  beats $a$, all by 4 votes.  
\item $a$ beats $u'$ by 2 votes.  
\item For every variable $x_i$, candidates $x_i^+$ beats every clause
  candidate $c_i$ where the corresponding clause is satisfied by $x_i$
  assigned to true, by 4 votes.
\item For every variable $x_i$, candidates $x_i^-$ beats every clause
  candidate $c$ where the corresponding clause is satisfied by $x_i$
  assigned to false, by 4 votes.
\end{itemize}

% TODO diagram

We will show that $p$ can be made a winner of $(C,V)$ through control
by partition (or runoff partition) of candidates if and only if
$(U,Cl)$ is a positive instance of 3SAT.  

Assume that $(U,Cl)$ is a satisfiable 3SAT instance.  Then consider
the partition of the candidates with the first partition (the one that
undergoes the initial election in the non-runoff case) containing the
$x_i^+$ and $x_i^-$ candidates corresponding to the satisfying
assignment, and also all other candidates except $p$.  We will show
that no candidates will be promoted from this first subelection, and
$p$ will win the final election.  Since we have $x_i^+$ and $x_i^-$
candidates corresponding to a satisfying assignment, there is a strong
path from $a$ to every clause candidate $c_i$.  One of each of $x_i^+$
and $x_i^-$ is not present for every $x_i$, breaking potential paths
from $x_i'$ candidates to $a$.  The overall effect is that $a$ will be
a winner, along with $c'$ and a number of other candidates, and since
we are in the ties-eliminate model, no candidates will be promoted
from this subelection.  Depending on whether we are in the partition
or runoff partition case, $p$ either faces the remaining candidates
(the unchosen variable assignment candidates) in an initial
subelection or in the final election, and they will win.

Assume that we map to an accepting instance of the control problem. We
will show that there must be a satisfying assignment for the 3SAT
instance.  $p$ only beats the variable assignment candidates and $a$,
and only narrowly.  $p$ can't survive an election that contains any of
the other candidates, so we must eliminate them in an initial
subelection.  $c'$ has as path to every candidate of strength 2, and
no other candidate has a path to them of greater than strength 2, so
they will invariably be a winner of any subelection of which they are
a part.  So the only way to eliminate them is to force a tie in an
initial subelection.  $a$ has potential to tie $c'$, but to do so we
have to give them a strong path to every clause candidate, and ensure
there are no paths from any of the $x'$ candidates.  Thus we need to
include variable assignment candidates satisfying each clause, but not
include both the $x^+$ and $x^-$ candidates for any $x$.  So if
control is possible, there must be a valid satisfying assignment for
the 3SAT instance.
\end{proof}

\begin{theorem}
Schulze voting is resistant to constructive control by partition and
runoff partition of candidates, ties promote.
\end{theorem}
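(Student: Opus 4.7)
The plan is to adapt the reduction from the preceding ties-eliminate theorem to the ties-promote model. In the TE setting, a satisfying assignment produced a subelection with several tied Schulze winners (including $a$, $c'$, and clause candidates), and TE then discarded all of them; under TP those same tied winners would instead all advance, and the dangerous ones (such as $c'$, $u'$, or a clause candidate) would defeat $p$ in the final election. So the construction must be strengthened so that under a valid satisfying assignment the intended subelection has a \emph{unique} Schulze winner that $p$ can then beat in the final.

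My approach is to add positive-weight in-edges to the currently ``unbeatable'' candidates such as $c'$ and $u'$, routed through the chosen literal candidates, so that some designated ``champion'' candidate strictly beats every dangerous rival precisely when the chosen literals correspond to a satisfying assignment. Concretely, I would introduce an auxiliary candidate $b$ (or repurpose and strengthen $a$) with small direct wins over $c'$ and $u'$, wire $b$ into the literal subgraph so that $b$'s best paths to $c'$, $u'$, and every $c_i$ become strong only when each clause is covered, and arrange for $p$ to beat $b$ by a comfortable margin. I would also adjust the existing edge weights so that the previously tie-creating paths now produce strict inequalities in the right direction.

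The two directions of the equivalence would then be verified parallel to the TE case. For the forward direction, given a satisfying assignment, place $p$ alone in $C_2$ (or, in the runoff case, in its own partition) and all remaining candidates in $C_1$; the chosen literals give $b$ sufficient path strength to strictly beat every candidate in $C_1$, so only $b$ promotes, and $p$ beats $b$ in the final. For the reverse direction, observe that $p$ cannot coexist with any of $c'$, $u'$, or a clause candidate at any stage and still win, so all such candidates must be strictly eliminated in an initial subelection; by the weight structure the only candidate capable of doing this is $b$, and $b$ acquires the needed path strengths only when the subelection contains literal candidates realising a satisfying assignment, without both $x_i^+$ and $x_i^-$ simultaneously (which would reopen $x_i'$'s threat).

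The main obstacle will be the strict-dominance requirement under TP. In the TE reduction it sufficed to tie dangerous candidates; in TP the champion $b$ must simultaneously \emph{strictly} beat every dangerous candidate while losing strictly to $p$, and no incidental path through the many literal, clause, or helper candidates may create an unintended tie or let a dangerous rival match $b$'s Schulze score. Choosing weights that enforce all these strict inequalities without opening new loopholes -- particularly given how globally Schulze best-paths depend on the election graph -- is where the bulk of the technical work lies. The runoff-partition variant should follow from the same construction, since placing $p$ alone in a partition makes its own subelection trivially promote $p$ and reduces the runoff analysis to the partition analysis above.
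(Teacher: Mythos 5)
Your central idea matches the paper's: under TP you cannot afford a many-way tie in the culling subelection, so the construction must be rearranged so that a single champion candidate is the \emph{unique} winner of its subelection exactly when the included literal candidates encode a satisfying assignment, and $p$ then beats that champion head-to-head in the final. The paper realizes this more simply than you anticipate: rather than adding a new candidate $b$ with edges into $c'$ and $u'$, it deletes $c'$ and $u'$ altogether, keeps $a$ and $a'$ as the champion gadget, gives $p$ a direct 2-vote win over $a$, and retunes the weights (6's on the clause-satisfaction and $a\to a'\to x_i^{\pm}$ edges against 4's on the $x_i'\to x_i^-\to x_i^+\to a$ chain) so that $a$ uniquely wins iff every clause candidate is covered and no $x_i'$ has a live path back to $a$.

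There is, however, a concrete error in your forward direction that would sink the argument as stated: you place $p$ alone in $C_2$ and \emph{all} remaining candidates in $C_1$. That puts both $x_i^+$ and $x_i^-$ into the champion's subelection for every variable, which reactivates exactly the $x_i'\to x_i^-\to x_i^+\to$ champion path that your own reverse direction correctly identifies as fatal ("which would reopen $x_i'$'s threat"). The consistency gadget only works because the \emph{unchosen} literal candidates are kept out of the champion's subelection; they must be placed in $C_2$ alongside $p$ (and $p$ must then also beat them in the final, which the 2-vote edges arrange). As written, your two directions contradict each other. Beyond that, the proposal defers the actual choice of weights --- which you yourself flag as "where the bulk of the technical work lies" --- so the reduction is not actually exhibited; the correctness of the whole theorem rests on those strict inequalities being realizable without creating stray strong paths, and that is precisely the part that needs to be written down and checked.
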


\begin{proof}

In the ties-promote case we can use a similar reduction to the
previous case.  In this version, we alter the candidate set and net
advantage scores so that the helper candidate $a$ can be made a unique
winner of their subelection if and only if there is a satisfying
assignment, and $p$ can only win the final election if $a$ is a unique
winner of a subelection.

The candidate set is as follows.  Compared to the previous case, we
have eliminated the adversary candidates $c'$ and $u'$.  They were
important previously to ensure that there was a unique winner other
than $a$ of any subelection not corresponding to a satisfying
assignment, but in this case they are not necessary.

\begin{itemize}
\item The distinguished candidate $p$.
\item For every variable $x_i \in U$, candidates $x_i^+$, $x_i^-$, and
  $x_i'$.  
\item For every clause $c_i \in Cl$, a candidate $c_i$.  
\item Helper candidates $a$ and $a'$.  
\end{itemize}

The new scores will be as follows.

\begin{itemize}
\item $p$ beats $a$ by 2 votes.  
\item Every clause candidate $c$ beats $p$ by 4 votes.  
\item Every clause candidate $c$ beats $a$ by 4 votes.    
\item $a$ beats $a'$ by 6 votes.
\item $a'$ beats every $x_i^+$ and $x_i^-$ by 6 votes.  
\item $a'$ beats every $x_i'$ by 2 votes.
\item $p$ beats every $x_i^+$ and $x_i^-$ by 2 votes.  
\item For every $x_i$, $x_i'$ beats $x_i^-$, $x_i^-$ beats $x_i^+$,
  and $x_i^+$  beats $a$, all by 4 votes.  
\item For every variable $x_i$, candidates $x_i^+$ beats every clause
  candidate $c_i$ where the corresponding clause is satisfied by $x_i$
  assigned to true, by 6 votes.
\item For every variable $x_i$, candidates $x_i^-$ beats every clause
  candidate $c$ where the corresponding clause is satisfied by $x_i$
  assigned to false, by 6 votes.
\end{itemize}

%% Since this is the ties-promote case, at least one candidate will be
%% promoted from the subelection(s).  

% to
If we map from a positive 3SAT instance we must have a positive
control instance.  The partition can be as follows: In the first
partition include $a$, $a'$, variable assignment candidates
corresponding to a satisfying assignment, and every $c_i$ and $x'$
candidate.  The other partition will contain just $p$ and the
remaining variable assignment candidates.  In the first subelection,
$a$ will be the only winner, as they will have strong paths to all the
clause candidates and no candidate has a strong path to them.  $p$
will win the final election with $a$ and possibly also with some
variable assignment candidates (in the nonrunoff case)

% from
If we map to a positive control instance we must have a satisfiable
3SAT instance.  Since we are working in the TP model, at least one
candidate will survive each subelection.  The only candidates that $p$
beat are $a$ and the variable assignment candidates (by only a narrow
margin), so to win we must use $a$ to eliminate any candidates that
beat $p$ in a subelection.  Thus we have to make sure that $a$ is a
winner of their subelection, and no other candidate that beats $p$ is
a winner.  For this to happen $a$ must have a strong path to every
clause candidate and none of the $x'$ candidates can have strong paths
back to $a$.  We must include a valid satisfying assignment of
variable assignment candidates for this to be the case, so we must
have a satisfiable formula.
\end{proof}

\subsubsection{Destructive Control by Partition of Candidates Cases}

\begin{theorem}
Schulze voting is vulnerable to destructive control by partition of
candidates and destructive control by runoff-partition of candidates
in either the ties-promote or ties-eliminate model.
\end{theorem}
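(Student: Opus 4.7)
The plan is to exhibit polynomial-time algorithms for all four variants (DC-PC and DC-RPC, each in the TE and TP tie-handling models). The underlying flexibility that makes destructive control tractable here is that we only need \emph{some} candidate to defeat $p$ in the final, so the partition can be used primarily as a tool for keeping $p$ out of the final election entirely.

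First I would dispatch the two TE cases, where the algorithm is essentially a Condorcet-winner check. If $p$ is not a Condorcet winner of $(C,V)$, pick any $a$ with $netadv(a,p) \geq 0$, place $\{a,p\}$ in the side of the partition containing $p$, and send the rest of the candidates to the other side. The subelection on $\{a,p\}$ is decided entirely by the single pairwise comparison: either $a$ strictly beats $p$ (so $a$ is the unique Schulze winner and $p$ is not promoted) or they tie (and the TE model eliminates both). Either way $p$ is absent from the final, so control succeeds. Conversely, if $p$ \emph{is} a Condorcet winner, then for every subset $S$ containing $p$ every outgoing edge from $p$ in the net-advantage subgraph on $S$ has positive weight and every incoming edge has negative weight, so every path from a rival back to $p$ has a negative bottleneck while $p$'s direct edges are positive, making $p$ the unique Schulze winner of every subelection containing $p$ and hence of the final. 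Thus no partition can defeat a Condorcet winner. Since checking the Condorcet property is polynomial, both TE cases are resolved.

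For the two TP cases the analysis is more delicate, because tied candidates all promote and ties can no longer be used to eliminate $p$. The approach rests on the equivalence that destructive control succeeds in TP if and only if there exists a subset $S \subseteq C$ with $p \in S$ and some $a \in S$ whose strongest path to $p$ within $S$ strictly beats $p$'s strongest path back to $a$ within $S$. The forward direction takes $C_1 = S$ (with $C_2 = C - S$): in both PC and RPC, $p$ is strictly beaten in its own subelection and therefore not promoted, so the final election does not contain $p$. The reverse direction observes that if no such $S$ exists, then in particular $p$ is a Schulze winner of its own subelection (so gets promoted) and also of the final election, so control fails. To decide existence of such an $S$ I would iterate over each candidate $a \neq p$ and each candidate bottleneck weight $w$ drawn from the polynomially many distinct net-advantage values: check whether the net-advantage subgraph on edges of weight at least $w$ admits a path from $a$ to $p$, and whether a minimum $p$-to-$a$ vertex cut in that subgraph can be chosen so as to preserve at least one $a$-to-$p$ path of weight at least $w$. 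The cut vertices become $C - S$; the rest of the candidates become $S$.

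The main obstacle I anticipate is verifying correctness of the TP subroutine, specifically that (i) it suffices to try only weights $w$ that actually occur in the net-advantage graph, and (ii) the min-cut computation can be adapted to preserve a designated $a$-to-$p$ path while severing all sufficiently strong $p$-to-$a$ paths, coping with vertices that lie simultaneously on preserved paths and cut paths. This parallels subtleties that appear in vertex-cut arguments for other destructive-control problems on Schulze elections. Granting correctness, each of the four cases runs in polynomial time, since there are polynomially many candidates $a$ and weights $w$ to iterate over and each min-cut plus Schulze-score recomputation is polynomial.
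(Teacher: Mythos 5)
Your TE argument is correct and is essentially the paper's: reduce to the two-candidate subelection $\{a,p\}$ for any $a$ with $netadv(a,p)\geq 0$, and observe that a Condorcet winner survives every subelection, so the whole problem is a single pairwise check. The gap is in your TP half. Your characterization (control succeeds iff some subset $S \ni p$ exists in which some $a$ strictly beats $p$ in Schulze score) is fine, but the algorithm you propose to decide it is not known to run in polynomial time. The subroutine you describe --- find a vertex cut severing all sufficiently strong $p$-to-$a$ paths while \emph{preserving} at least one strong $a$-to-$p$ path --- is exactly the ``path-preserving vertex cut'' problem, and this paper itself, in the section on destructive adding/deleting of candidates, states that the complexity of that problem is open (it only gives Turing reductions \emph{to} it, not an algorithm \emph{for} it). You flag this yourself as ``the main obstacle,'' but it is not a verification detail to be granted; it is the load-bearing step, and without it your TP case proves nothing.

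The repair is to notice that the existential over $S$ collapses: if any $a$ has $netadv(a,p) > 0$, then $S = \{a,p\}$ already witnesses the characterization, since in a two-candidate election the Schulze scores are just the pairwise margins. Conversely, if no such $a$ exists, $p$ is a weak Condorcet winner, remains one in every subset, and Schulze voting always makes weak Condorcet winners into winners; so under ties-promote $p$ is promoted from its subelection and wins the final, and control is impossible. Hence the TP case, like the TE case, reduces to scanning the net-advantage row into $p$ (strict inequality for TP, weak for TE), which is the paper's algorithm. No cut computation, weight enumeration, or Schulze-score recomputation over subsets is needed.
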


Schulze voting is vulnerable to each of the variants of destructive
control by partition of candidates. We will show this,
and additionally show that these problems are easy for a broad class
of Condorcet voting systems.

In the nonunique-winner model that we are following here, both of the
pairs destructive control by partition of candidates, ties eliminate
and destructive control by runoff partition of candidates, ties
eliminate; and destructive control by partition of candidates, ties
promote and destructive control by runoff partition of candidates,
ties promote are in fact the same problems, that is, the same sets.
This fact was discovered in~\cite{hem-hem-men:t:svd} by noting shared alternative
characterizations of these problems.  The nominal difference is that
in the ``partition of candidates'' case, the candidate set is
partitioned and only one part undergoes a culling subelection while
the other part gets a bye, while in the ``runoff partition of
candidates'' case, both parts of the partition first face a
subelection.  In truth they are much simpler problems, and identical
(within the same tie-handling model).  Namely, the
sets DC-PC-TE and DC-RPC-TE are equivalent to the set of instances
$((C,V), p)$ where there is some set $C' \subseteq C$ with $p \in C'$
such that $p$ is not a unique winner of the election $(C', V)$.  The sets
DC-PC-TP and DC-RPC-TP are equivalent to the set of instances $((C,V),
p)$ where there is some set $C' \subseteq C$ with $p \in C'$ such that
$p$ is not a winner of the election $(C', V)$.  We will build
algorithms for these cases aided by these characterizations.
% changez2 swapped unique around

These algorithms are optimal in a class of voting systems that are Condorcet
voting systems that also possess a weaker version of the
Condorcet criteria---where if there are any candidates that do no
worse than tying in pairwise contests with other candidates (known as weak Condorcet winners), they will
be winners (possibly, but not necessarily unique) of the election.
There can potentially be one or more than one such candidates.
Schulze voting is such a voting system:  If a candidate is a weak
Condorcet winner, doing no worse
than tying against any other candidate, no candidate can have a path to that candidate of strength greater
than 0, and since that candidate at least ties every other candidate,
he or she has a path of strength at least 0 to every other candidate.
Thus he or she is unbeaten in Schulze score, and will be a winner.

% TODOlater short proofs of these maybe
Other Condorcet voting systems that possess this property include
Copeland$^{1}$, Minimax, and
Ranked Pairs.

\begin{proof}[DC-PC-TP/DC-RPC-TP Case]
Recall our alternate characterization for these problems, that the set
of positive instances is the same as the set of instances $((C,V), p)$
where there is some set $C' \subseteq C$ with $p \in C'$ such that $p$
is not a winner of the election $(C', V)$.  Thus, finding if we have a
positive instance is very similar to the problem of finding if we have
a positive instance in the destructive control by deleting candidates
problem, except that there is no longer a limit on the number of
candidates we can delete.  Thus we do not have to carefully limit the
number of deleted candidates, but we can freely delete as many
candidates as necessary to put $p$ in a losing scenario.  

%% The
%% complicated series of steps from the DC-DC algorithm is no longer
%% necessary and we can find a successful manipulation, if one exists,
%% with a very simple check.

Given a control instance $((C,V), p)$, all we must do is see if
there is any candidate $a \in C$, $a \neq p$,  such that $netadv(a,p) > 0$.  If such
a candidate exists, we let our first partition be $\{a,p\}$ and let the
second be $C - \{a,p\}$.  $p$ will lose their initial election to $a$,
and be eliminated from the election. If there is no such candidate,
$p$ is a weak Condorcet winner, and thus they will always be a winner
of the election among any subset of the candidates, as they will be a
weak Condorcet winner (at least, or possible a Condorcet winner) among
any subset of the candidates.  Thus our algorithm will indicate
failure.  
We can perform this check
through a simple examination of the net advantage scores which can be
generated from an election in polynomial time, and so this algorithm
runs in polynomial time.  Thus Schulze voting (and any other system
meeting the aforementioned criteria) is vulnerable and
constructively vulnerable to DC-PC-TP/DC-RPC-TP.
\end{proof}

\begin{proof}[DC-PC-TE/DC-RPC-TE Case]
These cases have the alternate characterization that the set of
positive instances is equivalent to the set of instances $((C,V), p)$
where there is some set $C' \subseteq C$ with $p \in C'$ such that $p$
is not a unique winner of the election $(C', V)$, that is, there are
multiple winners, or no winners, or there is a single winner that
is not $p$.  This case thus differs only slightly from the
DC-PC-TP/DC-RPC-TP case and we can create a very similar simple
algorithm.

Given a control instance $((C,V), p)$, we can find a successful action
if one exists by simply checking if there is any candidate $a \in C$,
$a \neq p$, such that $netadv(a,p) \geq 0$.  If so, we let our first
partition be $\{a,p\}$ and let the second be $C - \{a,p\}$.  This
results in $p$ either not being a winner of the subelection at all or
being a winner along with $a$ if the net advantage score is $0$.
Either way, $p$ will not be promoted to the final election and thus
they will not be a winner of the final election.  If there is no such
candidate, $p$ is a Condorcet winner, and they will be a Condorcet
winner among any subset of the candidates, so this type of control
will never be possible, and our algorithm will indicate that.  As
before, generating the net advantage function is easily possible in
polynomial time, and the simple check will only take polynomial time,
so Schulze voting (and any other system meeting the aforementioned
criteria) is vulnerable and constructively vulnerable to
DC-PC-TE/DC-RPC-TE.
\end{proof}

\subsubsection{Partition of Voters Cases}

\begin{theorem}
Schulze voting is resistant to constructive control by partition 
  of voters, ties eliminate.
\end{theorem}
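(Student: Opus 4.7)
The plan is to prove resistance by a polynomial-time many-one reduction from 3SAT, in the style of the preceding constructive-control reductions in this section. Given a 3SAT instance $(U,Cl)$ I will build an election $(C,V)$ with distinguished candidate $p$ so that a successful partition of $V$ encodes a satisfying truth assignment.

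The candidate set $C$ will include $p$, literal candidates $x_i^+$, $x_i^-$, and an auxiliary $x_i'$ for each variable $x_i \in U$, one candidate for each clause $c_j \in Cl$, and a small collection of helper/adversary candidates similar in flavor to the $a$, $a'$ candidates of the two preceding partition-of-candidates proofs. The voter set $V$ will consist of two ingredients: (i) a base set of McGarvey-style voters establishing a ``skeleton'' of net-advantage edges---clause candidates narrowly beating $p$, helper candidates dominating the literal candidates, and the $x_i^+ / x_i^- / x_i'$ triangles arranged as in the constructive adding- and deleting-candidates proofs above---and (ii) for each variable $x_i$, a pair of ``toggle'' voters that are identical except for the relative ranking of $x_i^+$ and $x_i^-$. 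The partition $(V_1, V_2)$ then selects one toggle of each pair for $V_1$, which I interpret as the truth value of $x_i$.

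I will arrange things so that the subelection $(C, V_1)$ has $p$ as its unique Schulze winner exactly when the toggles assigned to $V_1$ correspond to a satisfying assignment: every clause candidate gets beaten on a strong path built from the appropriate literal toggles, and each $x_i'$ is kept down because exactly one of $x_i^+$, $x_i^-$ keeps its boosted edge (the same structural argument as in the adding-candidates proof). The subelection $(C, V_2)$ will be designed so that it never produces a unique Schulze winner regardless of how the toggles are split, forcing $D_2 = \emptyset$. Then the final election is $(\{p\}, V)$, which $p$ trivially wins. Conversely, any successful partition must produce $p$ as the unique winner of some subelection, and in our construction this will force the $V_1$-toggles to correspond to a satisfying assignment.

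The main obstacle I anticipate is the delicate net-advantage bookkeeping required to make both subelections behave correctly at once: unlike the CC-DV case, the voters moved to $V_2$ are not discarded, and I must show that they produce no unique Schulze winner for \emph{every} possible partition. My plan for this is to include enough symmetric padding voters and, if needed, an extra cyclically arranged cluster of helper candidates, so that pairwise ties or Schulze cycles persist in $(C, V_2)$ independently of how the toggle pairs are split. With this in place, the forward and backward directions of the reduction can both be verified by the same bottleneck/strongest-path analysis already used in the constructive adding-candidates proof above.
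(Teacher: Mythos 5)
Your high-level strategy coincides with the paper's: reduce from 3SAT, encode the truth assignment in the choice of which voters enter $V_1$, arrange that $p$ uniquely wins $(C,V_1)$ exactly when that choice is satisfying, have the other subelection promote nobody, and finish with the trivial final election $(\{p\},V)$. The problem is that the proposal defers precisely the steps where the proof lives, and the gadget you sketch has a structural flaw. In control by partition of voters every candidate is present in both subelections, so your ``toggle'' voters---identical except for the relative order of $x_i^+$ and $x_i^-$---can shift only the single quantity $netadv(x_i^+,x_i^-)$, by one unit per voter; they cannot by themselves determine whether $p$ reaches each clause candidate, because the presence/absence mechanism that drives the adding- and deleting-candidates reductions has no analogue here. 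The paper's construction abandons the literal candidates and the bottleneck-path machinery altogether: it uses one candidate per variable and one per clause, and the two voters associated with $x_i$ directly rank $p$ above exactly the clauses satisfied by $x_i$ set to true (respectively false) while ranking $x_i$ above $p$, so that which of them enters $V_1$ directly controls the pairwise margins of $p$ against each $c_j$ and each $x_i$. The whole argument is then elementary pairwise bookkeeping against three padding groups of sizes $||U||-3$, $1$, and $2$; no strongest-path analysis is needed.

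The second gap is your treatment of $V_2$ and of arbitrary partitions. You propose to make $(C,V_2)$ have no unique winner ``regardless of how the toggles are split,'' but the chair is not restricted to partitions respecting your toggle pairing: for example, $V_2$ could consist of a single voter, whose top choice is then a Condorcet and hence unique Schulze winner of $(C,V_2)$, so the property you want is unattainable and in any case cannot carry the converse direction. What the converse actually requires---and what the paper argues---is that in the ties-eliminate model $p$ must be the \emph{unique} winner of some part, and that for an \emph{arbitrary} partition this forces at least one $p$-over-$c_j$ voter per clause and at most one $x_i$-over-$p$ voter per variable into that part, yielding a satisfying assignment; the forward direction only needs to exhibit one partition for which the leftover part promotes nothing. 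Your sketch asserts these conclusions but supplies none of the vote counts that would make them checkable. As written this is a plan for a proof rather than a proof; to complete it you would need either to repair the toggle mechanism and work out concrete margins, or to switch to the paper's leaner direct-pairwise construction.
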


\begin{proof}
We will reduce from a 3SAT instance $(U, Cl)$ and output a control
instance $((C,V),p)$.  The candidate set $C$ will
be as follows:

\begin{itemize}
\item A distinguished candidate $p$.  
\item A Candidate $c_i$ for every clause $c_i \in Cl$.
\item A candidate $x_i$ for every variable $x_i \in U$.
\item An auxiliary candidate $a$.
\end{itemize}

The voter set $V$ will be as follows:  

% no voter can rank p at the top

\begin{itemize}
\item For each variable $x_i$, where $D$ is the set of clauses
  satisfied by $x_i$ assigned to true, a voter ranking $p$ over $c_i$
  for $c_i \in D$ and $c_j$ over $p$ for $c_j \notin D$, ranking $x_i$
  over $p$, but ranking $p$ over $x_j$ for $x_j \in U - \{x_i\}$, and
  ranking $p$ over $a$.

\item For each variable $x_i$, where $D$ is the set of clauses
  satisfied by $x_i$ assigned to false, a voter ranking $p$ over $c_i$
  for $c_i \in D$ and $c_j$ over $p$ for $c_j \notin D$, ranking $x_i$
  over $p$, but ranking $p$ over $x_j$ for $x_j \in U - \{x_i\}$, and
  ranking $p$ over $a$.

\item $||U|| - 3$ voters preferring $p$ over every $c_i$, but preferring
  every $x_i$ and $a$ over $p$.  

\item $1$ voter preferring $p$ over every $c_i$ and over $a$, but
  preferring every $x_i$ over $p$.

\item $2$ voters preferring $p$ over every $c_i$ and $x_i$ but with $a$
  preferred over $p$.  
\end{itemize}

% need: U: p > c
%       U-2: x > p, 2: p>x
%       U-1: a>p , 1: p>a

If we are mapping from a positive 3SAT instance, it will be possible
to make $p$ win the final election through this type of control.  The
partition will be as follows: The first partition will contain all of
the third, fourth, and fifth groups of voters, and will contain voters
from the first two groups corresponding to a satisfying assignment.
The result of adding the voters from the third, fourth, and fifth
groups will be to give $p$ $||U||$ votes over each $c_i$, each $x_i$
$||U||-4$ votes over $p$, and $a$ $||U||-2$ votes over $p$.  By adding
the variable assignment voters corresponding to satisfying assignment,
we decrease every $p$-$c_i$ edge by no more than $||U||-2$, increase
every $p$-$x_i$ edge by $||U||-2$, and increase the $p$-$a$ edge by
$||U||$.  Thus $p$ will beat every candidate and win their
subelection.  In the other subelection, with the remaining variable
assignment candidates, $p$ will tie with some of the clause candidates
(unless the inverse assignment is also satisfiable) and so no
candidates will be promoted.  $p$ will thus be alone in the final
election and win.

If we map to a positive control instance, the 3SAT instance will be
satisfiable.  No voter prefers $p$ outright, so we must balance their
votes against the other voters to have a chance.  The voters in the
first two groups prefer $p$ over $a$ and they mostly prefer $p$ over
the variable candidates, but they mostly prefer the clause candidates
over $p$.  The voters in the third, fourth, and fifth groups can boost
$p$ over the clause candidates, but they give $a$ and the variable
candidates an advantage over $p$.  To make $p$ the only winner of a
subelection, we have to carefully select the voters from the first two
groups to keep $p$ ahead of both the variable and clause candidates.
To do so we must ensure that at least one voter prefers $p$ over each
$c_i$, and only one voter prefers each $x_i$ over $p$.  These voters
thus correspond to a satisfying assignment, so the 3SAT instance must
be satisfiable.  
\end{proof}

\begin{theorem}
Schulze voting is resistant to constructive control by partition 
  of voters, ties promote.
\end{theorem}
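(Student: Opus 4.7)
The plan is to reduce from 3SAT, adapting the construction from the preceding ties-eliminate proof. The underlying structure will remain the same: the candidate set contains $p$, a candidate $c_i$ per clause, a candidate $x_i$ per variable, and an auxiliary candidate $a$; the voter set is partitioned conceptually into variable voters (one per literal assignment) and several groups of ``baseline'' voters whose role is to calibrate the net advantage values so that $p$ just barely loses without help and just barely wins with the correct help. Selecting which variable voters accompany the baseline voters in a subelection will correspond to choosing a truth assignment, and a satisfying assignment will be exactly what allows $p$ to dispose of the clause candidates.

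The change needed for the ties-promote model is that we can no longer rely on tied subelections to eliminate threatening candidates. I would handle this by slightly shifting the baseline counts (for example, by one vote in each of the last three groups) so that, under a satisfying-assignment partition, $p$ strictly beats every clause candidate, every variable candidate, and $a$ in their subelection, becoming the unique winner. In the complementary subelection, the remaining variable voters should fail to create any Condorcet-style dominance: the candidates surviving there (if any) should all have strict losses to $p$ in head-to-head terms, so that when they promote to the final election against $p$, $p$'s direct edges and Schulze paths keep $p$ a winner. If needed, I would introduce a small additional group of voters, or tweak how $a$ interacts with the $x_i$ candidates, to guarantee this final-round safety.

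For the forward direction, given a satisfying assignment I would place into the first partition all baseline voters together with the one variable voter per $x_i$ corresponding to that variable's truth value; summing the shifts shows that $p$ beats each opponent by a strictly positive margin, so $p$ is the unique winner of that subelection, promotes alone, and wins the final election. For the reverse direction, I would argue that because ties now promote, any successful partition must produce a subelection in which $p$ is actually a winner (not merely tied out of existence), which in turn forces exactly one variable voter per $x_i$ to be chosen, and collectively forces the chosen voters to cover every clause at least once---i.e., a satisfying assignment.

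The main obstacle will be calibrating the vote shifts so that the two subelections behave correctly in both directions simultaneously: the ties-promote rule is strictly less forgiving than ties-eliminate, so I must rule out ``degenerate'' partitions where $p$ scrapes through via some unexpected promoted candidate that $p$ happens to beat in the final round without any corresponding satisfying assignment. Getting the margins large enough to guarantee strict wins in the good case, yet small enough that non-satisfying partitions still leave some clause candidate beating $p$, is the delicate bookkeeping step; I expect this to force the precise counts ($\|U\|-3$, $1$, $2$ and so on) to be replaced by slightly adjusted values, possibly with one extra baseline voter or one extra auxiliary relation involving $a$.
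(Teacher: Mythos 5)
Your plan follows the paper's approach: the paper also proves this case by reusing the ties-eliminate construction (same candidate set $p$, clause candidates, variable candidates, and auxiliary $a$; same two groups of variable-assignment voters) and adjusting only the baseline voter groups. Your forward and reverse directions are structured the same way as the paper's, and you correctly identify that the whole difficulty of the TP model is that you can no longer kill the complementary subelection with ties.

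However, the one place where the TP case genuinely differs from the TE case is exactly the place your proposal leaves unresolved, and I do not think your stated fallback works as described. In the ties-promote model every subelection promotes at least one candidate (there is always at least one Schulze winner), so the complementary subelection --- the one containing the leftover variable-assignment voters --- will send somebody to the final. With the full electorate, each clause candidate beats $p$ head-to-head by roughly $2\|U\|-4$ (only the three voters whose literals satisfy a clause rank $p$ above it), so if any clause candidate survives that subelection, $p$ loses the final and your forward direction collapses. Your criterion that ``the candidates surviving there should all have strict losses to $p$'' is the right target, but among the leftover variable voters alone a clause candidate unsatisfied by the complementary assignment is ranked above $p$ by every one of those voters, so you cannot achieve that target without an explicit suppression mechanism. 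The paper's device is not a ``small tweak'': it adds a bloc of $\|U\|+2$ voters ranking $a$ above all variable candidates, above all clause candidates, above $p$, which is placed (by necessity) in the complementary part and makes $a$ the unique winner there, so that only the harmless $a$ is promoted; the counts of the remaining baseline groups ($\|U\|-1$ and $1$) are then rebalanced so that $p$ still at least ties $a$ and the $x_i$ in its own subelection and in the final. (Incidentally, you do not need $p$ to be the \emph{unique} winner of its subelection --- ties promote, so tying suffices --- but that is a harmless strengthening.) To complete your proof you must commit to such a dominant-bloc construction for the complementary subelection and verify the resulting arithmetic in both directions; without it the reduction is not yet established.
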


\begin{proof}
This case can be shown through a reduction from 3SAT very similar to
the previous case.  We will map from a 3SAT instance $(U, Cl)$ to a
control instance $((C,V),p)$.  We will have the following candidate
set $C$:

\begin{itemize}
\item A distinguished candidate $p$.  
\item A Candidate $c_i$ for every clause $c_i \in Cl$.
\item A candidate $x_i$ for every variable $x \in U$.
\item An auxiliary candidate $a$.
\end{itemize}

The voter set $V$ will be the following:

\begin{itemize}
\item For each variable $x_i$, where $D$ is the set of clauses
  satisfied by $x_i$ assigned to true, a voter ranking $p$ over $c_i$
  for $c_i \in D$ and $c_j$ over $p$ for $c_j \notin D$, ranking $x_i$
  over $p$, but ranking $p$ over $x_j$ for $x_j \in U - \{x_i\}$, and
  ranking $p$ over $a$.

\item For each variable $x_i$, where $D$ is the set of clauses
  satisfied by $x_i$ assigned to false, a voter ranking $p$ over $c_i$
  for $c_i \in D$ and $c_j$ over $p$ for $c_j \notin D$, ranking $x_i$
  over $p$, but ranking $p$ over $x_j$ for $x_j \in U - \{x_i\}$, and
  ranking $p$ over $a$.

\item $||U|| - 1$ voters preferring $p$ over every $c_i$, but preferring
  every $x_i$ and $a$ over $p$.  

\item $1$ voter preferring $p$ over every $c_i$ but with $a$ and
  every $x_i$ preferred over $p$.

\item $||U|| + 2$ voters preferring $a$ over the variable assignment
  candidates over the clause candidates over $p$.   
\end{itemize}

We have just changed the later groups of voters and left the variable
assignment voters the same.  Now, with just the third and fourth
groups of voters present $a$ has $||U||$ votes over every $c_i$ and
$||U||-2$ votes over every $x_i$, and $a$ has $||U||-2$ votes over $p$.
This proof can proceed similarly to the previous case, but noting that
now we are in the TP model, multiple candidates can be promoted to the
final election, so we must ensure that no candidates that can beat $p$
with the entire voter set will make it.  Thus to make $p$ a winner we
must cause them to at least tie the $x_i$ and $a$ candidates, and cause them
to beat the $c_i$ candidates in the first round.  For this to
happen we must include in a partition voters corresponding to a
satisfying assignment, as well as the third and fourth groups of
voters.  In the other partition we must prevent any of the clause
candidates from winning, and this will happen with the remaining
variable assignment voters and the fifth group of voters: $a$ will be
the only winner.  Thus $p$ will end up being a winner of the final
election.  

If we have a positive control instance, $p$ must be a winner of at
least one initial subelection, and so this partition must have
balanced the points $p$ gets over the clause candidates from the third
and fourth groups with the points $p$ gets over the variable
candidates and $a$ from the first two groups, and the voters from the
first two groups must have been carefully chosen so at least one of
them has $p$ over each clause candidate, and no more than one has each
variable candidate over $p$.  Thus there must be a satisfying
assignment for the 3SAT instance.
\end{proof}

\begin{theorem}
Schulze voting is resistant to destructive control by partition 
  of voters, ties eliminate.
\end{theorem}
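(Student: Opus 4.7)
The plan is to reduce from 3SAT, following the pattern of the preceding constructive control by partition of voters proofs but inverted for the destructive goal. Given a 3SAT instance $(U,Cl)$, I would construct an election $((C,V),p)$ with candidate set containing $p$, a primary adversary $a$, a clause candidate $c_i$ for each clause, literal candidates $x_i^+, x_i^-$ for each variable, and auxiliary candidates $x_i'$ used to forbid promoting both $x_i^+$ and $x_i^-$ simultaneously. The election will be arranged so that under the full voter set $V$, $p$ is a Schulze winner: the only candidate that could threaten $p$ is $a$, and every $a$-to-$p$ Schulze path must pass through some $c_i$, but every such route is bottlenecked at a weak edge unless the right collection of literal candidates is present. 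Thus defeating $p$ amounts to getting a ``satisfying'' set of literal candidates promoted from a subelection and then exploiting the resulting strong $a$-to-$p$ path in the final election with voter set $V$.

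The voter set $V$ must be built explicitly as a list of total orders, since we cannot use the McGarvey construction when voters are being partitioned; each voter will play several coordinated roles at once. The forward direction, given a satisfying assignment $\tau$, places in $V_1$ the ``assignment'' voters for $\tau$ together with a calibrated block of balancing voters, so that the ties-eliminate subelection $(C,V_1)$ promotes exactly the chosen literal candidates (every other candidate either loses outright or ties with another winner and is eliminated); $V_2$ holds the remaining voters and is arranged so $(C,V_2)$ has no unique winner and contributes nothing to $D_2$. In the final election on $D_1 \cup \{p\}$ with voter set $V$, the promoted literals form a bridge giving $a$ a Schulze path back to $p$ of strength greater than $p$'s best path back to $a$, so $p$ is not a final winner.

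For the converse, any partition defeating $p$ must, via its subelections, promote a set $D$ of literal candidates that (a) covers every clause, lest $a$'s best path to $p$ remain blocked at some unsatisfied $c_i$, and (b) contains at most one of $x_i^+, x_i^-$ for every $i$, lest some $x_i'$ be promoted in its place and neutralize the advantage. Such a $D$ corresponds directly to a satisfying assignment of $(U,Cl)$. The main obstacle is the same delicate one faced in the earlier partition-of-voters reductions: each voter is a single total order that must simultaneously produce the intended pairwise margins in the final election, the intended ties-eliminate behavior in each subelection, and (crucially) rule out every ``off-spec'' partition that might defeat $p$ for unintended reasons, for instance by splitting assignment voters across the two sides or exploiting an unforeseen tie among the $c_i$'s. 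Getting the counts in the balancing voter blocks right so that the assignment voters are both necessary and sufficient for any successful partition will be the bulk of the technical work.
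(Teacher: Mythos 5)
Your high-level plan inverts the mechanism the paper uses, and the inversion introduces gaps that I do not think can be patched as described. First, the ``bridge'' idea is backwards for partition of voters: the final election always uses the full voter set $V$, and its candidate set $D_1 \cup D_2$ is a \emph{subset} of $C$, so passing to the final round can only \emph{delete} Schulze paths, never create new ones. If $a$'s path to $p$ through the promoted literal candidates is strong enough to beat $p$ in the final, that same path already exists in $(C,V)$, where you stipulate that $p$ wins; hence $p$ must have an equally strong counter-path in $(C,V)$, and the real work would be arranging for the candidates on \emph{that} counter-path to be excluded from $D_1 \cup D_2$. Your write-up never identifies this, and it also contradicts itself about whether $p \in D_1$: you say $(C,V_1)$ promotes ``exactly the chosen literal candidates'' yet run the final on $D_1 \cup \{p\}$; nothing re-inserts $p$ into the final if it wins neither subelection---and if $p$ wins neither subelection, $p$ is already not a winner and the entire bridge is moot.

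Second, and more fundamentally, your converse direction only analyzes partitions in which $p$ survives to the final and is beaten there. In destructive voter partition the far easier route---and the one the paper's construction is built entirely around---is to have $p$ fail \emph{both} ties-eliminate subelections, so that $p$ never reaches the final at all. A correct reduction must show that every partition of this kind also encodes a satisfying assignment (or is impossible), and that is exactly where the paper puts its gadget: it makes $p$ a winner of any final election it can reach, introduces two adversaries $a$ and $b$ with no voter ranking $p$ below both, and calibrates the margins so that tying $p$ with $a$ in one part and with $b$ in the other forces the assignment voters to be split according to a satisfying assignment. You would need to add an argument of precisely this kind, at which point the bridge machinery becomes unnecessary. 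As it stands, the proposal is a genuinely different decomposition, but one with a hole in the only-if direction and a mechanism that cannot work as stated.
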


\begin{proof}

We will reduce from a 3SAT instance $(U, Cl)$ and construct a control
instance $((C,V), p)$.  The candidate set $C$ will be as follows:

\begin{itemize}
\item The distinguished candidate $p$.
\item The adversary candidates $a$ and $b$. 
\item Candidates $c_i$ and $c_i'$ for each clause $c_i \in Cl$.
\item Candidates $x_i$ and $x_i'$ for each variable $x_i \in U$. 
\end{itemize}

The voter set $V$ will be as follows: 

\begin{itemize}
\item For each variable $x_i$, where $D$ is the set of  clauses
  satisfied by $x_i$ assigned to true, a voter ranking $c_i'$ over $c_i$ for
  $c_i \in D$ and $c_j$ over $c_j'$ for $c_j \in Cl - D$, ranking $x_i'$ over
  $x_i$, but ranking $x_j$ over $x_j'$ for $x_j \in U - \{x_i\}$, ranking $p$
  over $a$ and $b$, the clause candidates over $a$ and the variable
  candidates over $b$.  

\item For each variable $x_i$, where $D$ is the set of  clauses
  satisfied by $x_i$ assigned to false, a voter ranking $c_i'$ over $c_i$ for
  $c_i \in D$ and $c_j$ over $c_j'$ for $c_j \in Cl - D$, ranking $x_i'$ over
  $x_i$, but ranking $x_j$ over $x_j'$ for $x_j \in U - \{x\}$, ranking $p$
  over $a$ and $b$, the clause candidates over $a$ and the variable
  candidates over $b$.

\item $2||U||-2$ voters ranking $a > p > b$, with the other candidates
  evenly distributed.% others?
\item $2||U||-2$ voters ranking $b > p > a$,  with the other candidates
  evenly distributed.% others?
%% \item $2||U||+2$ voters ranking $p > $ clause and variable candidates
%%   $>$ $a$ and $b$
\end{itemize}

The  unspecified parts of the votes should  be set to make the 
groups of candidates as even as possible, except for the $c_i'$
candidates, among whom there should be strong paths from each
candidate to each candidate.

If the mapped-from 3SAT instance is satisfiable, we will be able to
defeat $p$ through partition of voters.  The partition we use will be
the following.  The first part will consist of voters from the first
two groups corresponding to a satisfying assignment, and the voters
from the third group.  The second part will consist of the remaining
voters from the first two groups, and the voters from the fourth
group.  $a$ will defeat $p$ in the first partition.  By selecting the
voters to correspond to a satisfying assignment we weaken all of the
paths from $p$ to $a$ going through the clause candidates to be
$||U||-2$ in strength.  The third group of voters provide support for
$a$ over $p$, bringing the $a \rightarrow p$ edge to be $||U||-2$ in
strength.  Thus $a$ ties $p$, and no candidate will be promoted from
this subelection.  In the other subelection, $b$ will tie $p$ for
similar reasons, as they will have an edge of $||U||-2$ to $p$ while
that will also be the strength of the best path from $p$ to $b$ going
through any of the variable candidates.  Thus no candidate makes it to
the final election and control is successful.

If the mapped-to control instance is positive, we must have a positive
3SAT instance.  $p$ will win the final election against any other
candidate if they make it there, so to defeat them we must make them
lose both initial subelections.  No voters rank $p$ below both $a$ and
$b$, the strongest other candidates, so we must defeat $p$ with $a$ in
one subelection and defeat $p$ with $b$ in the other.  To make $b$
beat $p$, we have to include the fourth group of voters, the only
voters that rank $b$ over $p$.  $b$ would defeat $p$ if we include
just these voters, but then $p$ would win the other subelection with
the support of all the variable assignment voters.  Thus we have to
include half the variable assignment voters to give $a$ a chance
as well, and we must pick voters to weaken the paths from $p$ to $b$.
This will require us to pick only one voter corresponding to each
variable.  The other subelection will proceed similarly, but we will
have to pick at least one voter satisfying each clause to weaken the
$p$-$a$ paths.  Thus there must be a valid satisfying assignment for the
3SAT instance if control is possible.
\end{proof}

\begin{theorem}
Schulze voting is resistant to destructive control by partition 
  of voters, ties promote.
\end{theorem}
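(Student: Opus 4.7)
The plan is to reduce from 3SAT with a construction closely mirroring the DC-PV-TE case, but adjusted so that the two key subelections defeat $p$ strictly rather than by a tie; this is required because in the ties-promote model, any subelection in which $p$ ties a rival simply promotes $p$ to the final round. Given a 3SAT instance $(U,Cl)$, the plan is to build an election with the distinguished candidate $p$, two adversaries $a$ and $b$, candidates $c_i, c_i'$ for each clause, and $x_i, x_i'$ for each variable, structured so that $a$ and $b$ can be used to individually beat $p$ in one subelection each, and so that the only way to balance the accounting of both subelections simultaneously corresponds to a satisfying assignment.

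The voter set would retain the two groups of ``variable-assignment voters'' (one per literal), each of which ranks one $x_i$ over its primed counterpart and flips each $c_j'/c_j$ comparison according to whether the literal satisfies $c_j$, while generally supporting $p$ over $a$ and $b$. The supplementary groups ranking $a > p > b$ and $b > p > a$ would be enlarged by a small constant over the $2\|U\|-2$ count used in the TE case, so that when half the variable-assignment voters corresponding to a satisfying assignment are combined with one supplementary group, the direct $a\!\to\!p$ edge strictly exceeds every $p$-to-$a$ path (which are bottlenecked at clause candidates and weakened precisely because the assignment satisfies every clause). The symmetric argument would apply to $b$ in the other subelection, using the variable-side bottlenecks.

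For the forward direction, given a satisfying assignment, partition by placing voters corresponding to that assignment together with the $a > p > b$ group in one half, and the remainder together with the $b > p > a$ group in the other. The strict bottleneck weakening forces $a$ to strictly beat $p$ in the first subelection and $b$ to strictly beat $p$ in the second, so $p$ is not promoted in either case and therefore is not a winner of the final election. For the reverse direction, the plan is to argue that if $p$ is promoted from either subelection, $p$ wins the final round: the construction ensures that in any subset of candidates omitting enough voter support for $a$ or $b$, $p$ is a (weak) Condorcet winner, and the extra capacity built into the supplementary groups is tight enough that defeating $p$ with $a$ requires exactly a single voter per clause satisfying it, and defeating $p$ with $b$ requires exactly a single voter per variable. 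Combining these two coverage constraints across the partition forces a valid satisfying assignment.

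The main obstacle will be calibrating the vote multiplicities so that the margins are simultaneously strict enough for the ``promote'' behavior to matter and tight enough that no slack remains for cheating the clause-cover and variable-cover constraints. This is essentially a bookkeeping issue; the Schulze-specific content---that the $p$-to-$a$ and $p$-to-$b$ paths pass only through clause and variable bottlenecks respectively---is inherited directly from the DC-PV-TE construction and need not be reproven.
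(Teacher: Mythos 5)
Your proposal matches the paper's proof: the paper likewise reuses the DC-PV-TE construction verbatim and only bumps the two supplementary voter groups from $2\|U\|-2$ to $2\|U\|$ so that $a$ and $b$ strictly defeat $p$ in their respective subelections (since a tie would promote $p$ under TP), with the satisfiability argument otherwise unchanged. Your "small constant" calibration is exactly the paper's increase of two, so this is essentially the same argument.
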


\begin{proof}
This case can proceed very similarly to the previous case in the TE
model, except that we just have to slightly change some of the numbers
so that the adversary candidates each will fully defeat rather than
tie with $p$ in order to eliminate $p$ from both initial subelections.
The new voters will be as follows, and everything else can proceed in
the same way.

\begin{itemize}
\item For each variable $x_i$, where $D$ is the set of  clauses
  satisfied by $x_i$ assigned to true, a voter ranking $c_i'$ over $c_i$ for
  $c_i \in D$ and $c_j$ over $c_j'$ for $c_j \in Cl - D$, ranking $x_i'$ over
  $x_i$, but ranking $x_j$ over $x_j'$ for $x_j \in U - \{x_i\}$, ranking $p$
  over $a$ and $b$, the clause candidates over $a$ and the variable
  candidates over $b$.  

\item For each variable $x_i$, where $D$ is the set of  clauses
  satisfied by $x_i$ assigned to false, a voter ranking $c_i'$ over $c_i$ for
  $c_i \in D$ and $c_j$ over $c_j'$ for $c_j \in Cl - D$, ranking $x_i'$ over
  $x_i$, but ranking $x_j$ over $x_j'$ for $x_j \in U - \{x_i\}$, ranking $p$
  over $a$ and $b$, the clause candidates over $a$ and the variable
  candidates over $b$.  

\item $2||U||$ voters ranking $a > p > b$, with the other candidates
  evenly distributed.% others?
\item $2||U||$ voters ranking $b > p > a$,  with the other candidates
  evenly distributed.% others?
%% \item $2||U||+2$ voters ranking $p > $ clause and variable candidates
%%   $>$ $a$ and $b$
\end{itemize}

\end{proof}

\subsubsection{Destructive Adding/Deleting Candidates}

For the final cases of control, we do not resolve their complexity, but
we do reduce the problems to their core:  They are no harder than a
simpler problem called path-preserving vertex cut.  

\begin{proposition}
Each of destructive control by adding candidates, destructive control
by unlimited adding of candidates, and destructive control by deleting
candidates in Schulze voting polynomial-time Turing reduces to \emph{path-preserving
  vertex cut}.
\end{proposition}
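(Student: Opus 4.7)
The plan is a polynomial-time Turing reduction in which, for each candidate $a \neq p$, a single oracle query to path-preserving vertex cut decides whether we can delete (in DC-DC) or omit from $D$ (in DC-AC and DC-UAC) a set of candidates so that $a$ ends up defeating $p$ in Schulze score. We first run the Floyd-Warshall-based procedure of Figure~1 to compute all pairwise best-path weights. For each candidate $a$ we check the two necessary conditions $schulze\_score(a,p) > 0$ and $schulze\_score(a,p) > netadv(p,a)$; candidates failing either cannot be made to beat $p$ by any candidate removal, because Schulze scores are monotone non-increasing under vertex deletion.

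For DC-DC, set $s = schulze\_score(a,p)$ and build two overlapping subgraphs of the netadv graph: a graph $G^{\rightarrow}_{s}$ consisting of the edges of weight $\geq s$ that lie on some strongest $p$-to-$a$ path, and a graph $G^{\leftarrow}_{s}$ consisting of the strongest $a$-to-$p$ paths. A deletion set $S \subseteq C \setminus \{p,a\}$ makes $a$ defeat $p$ iff $S$ is a $p$-$a$ vertex cut in $G^{\rightarrow}_{s}$ \emph{and} $S$ leaves intact some $a$-to-$p$ path in $G^{\leftarrow}_{s}$ of weight $\geq s$. This is exactly the form of the path-preserving vertex cut problem, parameterized by the budget $k$. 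If the oracle returns no, lower the target to the next distinct $a$-to-$p$ best-path weight $s' < s$, rebuild the two subgraphs, and requery; only polynomially many target values arise since each corresponds to a distinct edge weight present in the netadv graph.

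For DC-AC and DC-UAC, provisionally add every candidate of $D$ to the election and compute scores once in the augmented instance. Then repeat the per-candidate loop, restricting allowable cut vertices to $D$ (an immediate encoding into path-preserving vertex cut by assigning prohibitive cost to vertices of $C$). In DC-UAC the cut size is unbounded, while in DC-AC we additionally enforce that at least $\|D\| - k$ of the $D$-candidates are cut, so that at most $k$ are truly ``added.'' The outer Turing reduction declares the original instance positive iff some candidate $a$ at some target weight produces a yes answer from the oracle.

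The main obstacle is proving correctness of the reduction: specifically, the subtlety that a vertex on the chosen $a$-to-$p$ preserved path may simultaneously lie on an unwanted strong $p$-to-$a$ path, forcing a genuinely nontrivial interaction between the two constraints rather than two independent cut/connectivity problems. Capturing exactly this coupling is what the path-preserving vertex cut abstraction is designed for; once the abstraction is in hand, soundness follows because the preserved path witnesses $schulze\_score_{\mathrm{new}}(a,p) \geq s$ while the cut forces $schulze\_score_{\mathrm{new}}(p,a) < s$, hence $a$ strictly beats $p$ in the post-deletion election and $p$ is no longer a winner. Completeness follows symmetrically from the fact that the outcome of the $a$-versus-$p$ comparison depends only on these two Schulze scores, both of which are determined by the surviving edges of $G^{\rightarrow}_{s}$ and $G^{\leftarrow}_{s}$. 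The remaining complexity barrier is therefore entirely concentrated in path-preserving vertex cut itself, whose status is left open.
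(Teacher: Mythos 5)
Your proposal takes essentially the same route as the paper: loop over each rival candidate $a$, build the subgraph of strong $p$-to-$a$ paths to be cut together with the subgraph of $a$-to-$p$ paths to be preserved, query a path-preserving vertex cut oracle, iterate over the polynomially many decreasing strength thresholds, and handle the adding cases by provisionally adding all of $D$ and restricting deletions to $D$. The only divergence is in the limited-adding case, where you impose a lower bound on the number of cut $D$-vertices while the paper instead bounds the number of added vertices appearing on the preserved $a$-to-$p$ path; both are presented at the same level of informality and neither changes the overall argument.
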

\begin{proof}

We define path-preserving vertex cut as follows.

\begin{description}
\item[Given] A directed graph $G = (V,E)$, distinct vertices $s,t \in
  V$, and a deleting limit $k \in \mathbb{N}$.  
\item[Question]
  Is there a set of vertices $V'$, $||V'|| \leq k$, such that the
  induced graph on $G$ with $V'$ removed contains a path from $t$ to
  $s$ but not any path from $s$ to $t$?
\end{description}

We note that the standard vertex cut problem is well-known to be solvable
in polynomial time, but to the best of our knowledge the complexity of
this variant is unknown (other than that it is clearly in NP).  Given
this result, Schulze voting will be vulnerable to each of these
control cases if this problem is found to be in P\@.

We will describe how these control cases can reduce to this problem,
first handling the DC-DC case.  Our input will be a DC-DC instance
$((C,V), p, k)$.  Since this destructive case of control, we only have
to make the distinguished candidate $p$ not a winner rather than
making any particular candidate positively a winner.  To do this we
must alter the election so that some other candidate has a better path to
$p$ than $p$ has to that candidate.  Since this is a case of control
by deleting candidates, we can only eliminate paths in the election
graph, not create them.  So we have to try to succeed by breaking the
paths from $p$ to some other candidate.  

We can first handle the cases where $p$ is a Condorcet winner (control
will be impossible) or where $p$ is already beaten by some candidate
(control is trivial).  Otherwise, we will loop through the candidates
and see if we can make any of them beat $p$.  

For each candidate $a$, we will first find the subgraph containing all
maximum-strength paths to $p$ (using a modified Floyd-Warshall
algorithm), and also the subgraph containing all paths from $p$ to $a$
at least as strong as the strongest $a$-$p$ path.  We will then take
the union of these graphs, disregard the edge weights, and apply a
subroutine for path-preserving vertex cut.  This will tell us if there
is any way to cut all of the strong paths from $p$ to $a$ while
preserving one of the strongest ones from $a$ to $p$.  If we ever get
a positive result from this call, we can indicate success.  Now, it
may be that we can succeed by preserving a path other than the
strongest path
from $a$ to $p$.  So if we failed, we repeat the process except
considering paths from $a$ to $p$ and from $p$ to $a$ that are a
little less strong, going down as far as the strength of the $p$-$a$
edge (which is not cuttable).  This loop is polynomially bounded in
the number of edges, as there can only be as many path strengths as
there are edges.  We indicate failure if we never achieve success with
any vertex or path strength.

The other cases can be handled similarly.  The difference is that they
are adding-candidates problems instead of deleting-candidates
problems, but we can reduce them to modified deleting-candidates
problems by including all of the addable vertices and then solving
them as deleting-candidates problems where we can only delete the
vertices in the added sets.  Additionally, in the non-unlimited case,
we would be restricted as to how many of the vertices are added, so we
would have to only look at paths from $a$ to $p$ that use a limited
number of the added vertices.
\end{proof}

% TODO conclusions or something

\bibliography{grycurtis}
\bibliographystyle{alpha}

\end{document}